\newtheorem{theorem}{Theorem}
\newtheorem{assumption}{Assumption}
\newtheorem{remark}{Remark}
\newtheorem{lemmaB}{Lemma}[section]
\newtheorem{theoremB}{Theorem}[section]
\newtheorem{corollaryB}{Corollary}[section]
\date{\today}
\title{A Heterogeneous Spatiotemporal GARCH Model: A Predictive Framework for Volatility in Financial Networks}
\author[1]{Atika Aouri}
\author[2]{Philipp Otto\footnote{Corresponding author: philipp.otto@glasgow.ac.uk}}
\affil[1]{Abdelhafid Boussouf University Center, Mila, Algeria}
\affil[2]{University of Glasgow, School of Mathematics and Statistics, UK}
\begin{document}

\maketitle

\abstract{We introduce a heterogeneous spatiotemporal GARCH model for geostatistical data or processes on networks, e.g., for modelling and predicting financial return volatility across firms in a latent spatial framework. The model combines classical GARCH($p$, $q$) dynamics with spatially correlated innovations and spatially varying parameters, estimated using local likelihood methods. Spatial dependence is introduced through a geostatistical covariance structure on the innovation process, capturing contemporaneous cross-sectional correlation. This dependence propagates into the volatility dynamics via the recursive GARCH structure, allowing the model to reflect spatial spillovers and contagion effects in a parsimonious and interpretable way. In addition, this modelling framework allows for spatial volatility predictions at unobserved locations. In an empirical application, we demonstrate how the model can be applied to financial stock networks. Unlike other spatial GARCH models, our framework does not rely on a fixed adjacency matrix; instead, spatial proximity is defined in a proxy space constructed from balance sheet characteristics. Using daily log returns of 50 publicly listed firms over a one-year period, we evaluate the model’s predictive performance in a cross-validation study.}

\section{Introduction}

Volatility clustering is a well-known phenomenon in financial time series: large return innovations tend to be followed by periods of high volatility, while tranquil periods often coincide with reduced market activity. Generalised autoregressive conditional heteroskedasticity (GARCH) models \citep{engle1982autoregressive, bollerslev1986generalized} provide a robust framework for capturing such temporal dependence in volatility. However, in real-world financial systems, individual assets are not isolated. Instead, they are embedded in complex financial networks shaped by common exposures, interbank lending, or structural economic similarities \citep[see, e.g.,][]{diebold2023past, chen2019tail, mattera2024network}. Consequently, volatility patterns often exhibit pronounced dependence across firms, sectors, or regions. This phenomenon is typically referred to interchangeably as spatial, cross-sectional, network, or topological dependence, depending on the context and the underlying representation of relationships. Similar forms of volatility propagation arise in environmental and epidemiological contexts, where uncertainty in model predictions, such as pollutant concentrations or disease incidence rates, varies dynamically across both space and time due to complex interactions between spatial covariates, temporal dynamics, and measurement errors \citep{huang2011class,otto2018generalised,otto2023dynamic,rodeschini2024scenario}.

This motivates the development of spatiotemporal GARCH models, which incorporate spatial interaction in the volatility dynamics. Existing approaches can be grouped into four broad classes: (1) multivariate GARCH models that include spatial lags in the volatility equation but assume cross-sectional independence (in volatility) at the same time point \citep[e.g.,][]{borovkova2012spatial,caporin2015proximity,holleland2020stationary,billio2023networks}, (2) spatiotemporal GARCH or log-GARCH models with instantaneous spatial dependence in volatility \citep[e.g.,][]{sato2017spatial,sato2018spatial,sato2018spatiotemporal,otto2023general,otto2024dynamic}, (3) stochastic volatility models with spatially correlated innovations \citep{huang2011class}, and (4) the novel idea proposed in this paper where spatial dependence enters through the conditional mean equation, indirectly propagating volatility through autoregressive feedback. 

Most of the spatiotemporal GARCH models have been developed for areal data or fixed sets of spatial locations (a notable exception is the stochastic volatility model of \cite{huang2011class}), limiting their ability to predict volatility at unobserved sites. We refer the interested reader to \cite{otto2025spatial} for a review of spatial and spatiotemporal volatility models. In contrast, geostatistical models operate on continuous spatial domains and enable interpolation (kriging) of latent variables at arbitrary locations. While kriging is widely used in mean modelling, its extension to volatility models remains underexplored. Bridging this gap requires a framework that not only accounts for spatial volatility dynamics but also enables volatility prediction at unobserved locations, a task with clear relevance in financial risk monitoring, stress testing, and systemic risk assessment, but also in other fields such as environmental science (e.g., modelling and forecasting pollution variability across monitoring networks) or epidemiology, where it supports the reliable estimation of spatial prediction intervals of disease outbreak intensities or uncertainty in infection dynamics.

Another limitation of existing models is their assumption of spatially homogeneous GARCH parameters, which may be unrealistic in heterogeneous domains. In financial networks, for instance, firms with different balance sheet structures may exhibit distinct levels of baseline volatility, persistence, and responsiveness to shocks. Similar forms of spatial heterogeneity arise in environmental applications (e.g., urban vs. rural air quality dynamics) and epidemiology (e.g., region-specific disease transmission patterns). Allowing GARCH parameters to vary smoothly across space reflects these differences and provides a more flexible and interpretable representation of volatility dynamics. Such parameter surfaces have been previously explored in nonparametric time series models \citep{Dahlhaus}, and we extend this idea to a spatiotemporal GARCH framework. Most existing spatiotemporal GARCH models rely on the assumption of spatial stationarity, which may be overly restrictive in the presence of heterogeneity, asymmetry, or directional effects \citep{Atika}. To address this, we build on the probabilistic results of \citet{Atika}, who extend classical GARCH dynamics by allowing spatially varying parameters and introducing geostatistical dependence via a covariance structure on the innovation process. This captures contemporaneous cross-sectional correlations that propagate forward in time through the recursive GARCH mechanism, enabling a parsimonious yet flexible representation of spatial volatility spillovers.

In this paper, we propose a heterogeneous spatiotemporal GARCH model defined over a continuous spatial domain. Volatility is modelled using a standard GARCH($p$,$q$) recursion at each spatial location, with spatially varying parameters and geostatistical dependence in the innovation process. Estimation is based on a locally weighted quasi-likelihood approach that yields smooth parameter surfaces and enables kriging of the volatility at unobserved locations. We illustrate the proposed methodology in an empirical application to a panel of 50 global firms with daily stock returns. More precisely, we consider the spatial domain to be represented by firm-level financial indicators, such as balance sheet positions, which induce a latent economic space over which the GARCH process is defined. This setting allows us to model and predict volatility across financial networks, also for unobserved locations/firms. We demonstrate how the resulting model captures spatial heterogeneity in volatility, propagates risk through the network, and enables accurate prediction of volatility at unobserved firms. Our findings highlight the importance of both spatial dependence and heterogeneity in modelling firm-level risk dynamics.

The remainder of the paper is organised as follows. Section \ref{sec:theory} introduces the heterogeneous spatiotemporal GARCH model. Sections \ref{sec:kriging} and \ref{sec:estimation} present the local quasi-likelihood estimation procedure and discusses kriging for volatility prediction. In Section \ref{sec:MC}, we evaluate the model's performance through a Monte-Carlo simulation study. Section \ref{sec:application} provides the empirical analysis of financial network data. Section \ref{sec:conclusion} concludes with a discussion and outlook.

\section{Heterogeneous Spatiotemporal GARCH model}\label{sec:theory}

We consider a spatiotemporal process $\{Z_t(s): s \in \mathcal{D},\, t \in \mathcal{T}\}$, defined over a general spatial domain $\mathcal{D} \subseteq \mathcal{M}$, where $\mathcal{M}$ denotes an underlying spatial structure such as a Euclidean space, a manifold (e.g., a sphere), or a network. This formulation allows for broad applicability: for instance, $\mathbb{R}^2$ is suitable for most environmental or epidemiological data with geographic coordinates, the sphere $\mathbb{S}^2$ accommodates global climate models, and financial or transportation networks are naturally represented by graph-based domains, where distances reflect structural dissimilarity rather than physical proximity. To ensure validity across such domains, the spatial covariance function governing the innovations must respect the geometry of $\mathcal{M}$ \citep[see, e.g.,][]{gneiting2010matern, porcu2016spatio, porcu2023stationary}, and the local estimation kernels must be adapted accordingly. In the present work, without loss of generality, we focus on a Euclidean domain $\mathcal{D} = [0,1]^2 \subset \mathbb{R}^2$ and assume a discrete, equidistant time index $t \in  \mathcal{T} = \mathbb{Z}$. We define a spatiotemporal GARCH($p$,$q$) (STGARCH) process as follows:
\begin{equation}\label{eq1}
\begin{cases}
Z_t(s) = \sigma_t(s) \eta_t(s)  \\ 
\sigma_t^2(s) = \omega(s) + \sum_{i=1}^q \alpha_i(s) Z_{t-i}^2(s) + \sum_{j=1}^p \beta_j(s) \sigma_{t-j}^2(s) \\
\end{cases}
\end{equation}
where $\omega(s)$, $\alpha_i(s)$, and $\beta_j(s)$ are location-specific parameters that vary smoothly over space, with the orders $q < p$. The innovation process $\{\eta_t(s)\}$ is assumed to be independent across time, strictly stationary and ergodic, with zero mean and unit variance, and exhibits spatial dependence through a valid covariance function on $\mathcal{D}$. In what follows, we consider a covariance function of the Mat\'{e}rn class \citep{gneiting2010matern}. This geostatistical structure captures contemporaneous cross-sectional correlations in the innovation process, which subsequently propagate into future volatility via the recursive GARCH equation. Thus, the model allows for spatial heterogeneity in both the innovation structure and the volatility dynamics, while supporting prediction at unobserved locations. Under mild smoothness conditions on the parameter functions, the spatially nonstationary process \(Z_t(s)\) can be locally approximated at any fixed location $s_0\in[0,1]^2$ by a spatially stationary GARCH($p$,$q$) process 
with constant parameters evaluated at $s_0$.  In particular, define
\begin{equation*}
\begin{cases}
\widetilde{Z}_{t}(s)=\widetilde{\sigma}_{t}(s)\,\eta_{t}(s), \\
\widetilde{\sigma}_{t}^2(s)
= \omega(s_{0})\;+\;\sum_{i=1}^q\alpha_i(s_{0})\widetilde{Z}_{t-i}^2(s)\;+\;\sum_{j=1}^p\beta_j(s_{0})\widetilde{\sigma}_{t-j}^2(s). \\
\end{cases}
\end{equation*}
This locally stationary approximation underpins our asymptotic analysis in Section \ref{sec:properties}. Further details on the approximation can be found in Appendix \ref{B}.

\begin{remark}\label{rmk:1}
Although the volatility process $\sigma_t^2(s)$ in \eqref{eq1} does not include instantaneous spatial interactions (i.e., there are no terms of the form $\sigma_t^2(s')$ with $s' \neq s$), and is, thus, temporally causal at each location $s$, it nonetheless inherits spatial dependence through the squared observations $Z_{t-i}^2(s)$. Since $Z_t(s) = \sigma_t(s)\eta_t(s)$ and the innovation process $\eta_t(s)$ is spatially correlated across $s \in \mathcal{D}$, the process $Z_t(s)^2$ also exhibits spatial dependence. Importantly, the range of spatial dependence in $Z_t(s)^2$ differs from that in $\eta_t(s)$. For instance, if $\eta_t(s)$ follows a zero-mean, unit-variance Gaussian process with spatial covariance function $C(h)$, it holds that $\mathrm{Cov}(\eta_t(s)^2, \eta_t(s')^2) = 2 \cdot C(h)^2,$
where $h = \|s - s'\|$. Thus, the spatial correlation in the squared process $\eta_t(s)^2$ decays faster than in $\eta_t(s)$, since $C(h)^2$ drops off more quickly with increasing $h$ than $C(h)$. This effect carries over to the squared process $Z_t(s)^2$, and consequently to $\sigma_t^2(s)$ through the recursive GARCH dynamics. Therefore, even though $\sigma_t^2(s)$ is not explicitly defined via spatial interactions at the same time point, it is still a spatially dependent process, with the dependence structure shaped by the properties of the innovation field.
\end{remark}

\subsection{Model Assumptions}\label{assumptions}

To develop the asymptotic theory for our spatiotemporal GARCH process defined in Equation~\eqref{eq1}, we impose a set of assumptions that ensure temporal stationarity, control the spatial variation in parameters, and guarantee the existence of necessary moments. We begin by formalising the properties of the innovation process.

\begin{assumption}\label{ass1}
  The innovations $ \eta_t(s) $ are independent over time, and spatially, strictly stationary. 
\end{assumption}

This assumption reflects the temporal white noise nature of the innovations, while allowing for spatial dependence, captured through a valid geostatistical covariance function. Stationarity in space ensures that the  structure of the innovation process is spatially invariant, an important requirement for spatial prediction and asymptotic consistency. Next, we define the parameter space and regularity conditions that ensure the existence of a strictly stationary solution to the spatiotemporal GARCH recursion.

\begin{assumption}\label{ass2} 
Let \( \Omega \) be the compact set defined by:
\begin{equation}
    \Omega= \{ (\omega(s), \alpha_i(s), \beta_j(s)) : \sum_{i=1}^{q} \alpha_i(s) + \sum_{j=1}^{p} \beta_j(s) \leq 1 - \gamma, \rho_1 \leq \omega(s) \leq \rho_2 \text{ for } i = 1, \ldots, q \text{ and } j = 1, \ldots, p \},
\end{equation}
where \(0 < \rho_1 \leq \rho_2 < \infty\) and $\gamma$ is a positive constant. For each \(s \in (0,1]^2\), we assume that \(\Theta\) is in the interior of \(\Omega\), where \(\Theta = (\omega(s), \alpha_1(s), \ldots, \alpha_p(s), \beta_1(s),\dots ,\beta_q(s))\).
\end{assumption}

This assumption guarantees that the local GARCH dynamics are well-defined, with the stationarity condition that $\sum \alpha_i(s) + \sum \beta_j(s)$ is bounded away from the unit circle,  uniformly across space. It is worth noting that this condition also implies that $ \inf_{s \in \mathcal{D}} \sigma_t^2(s) > \rho_1$, i.e., it is uniformly bounded away from zero, to avoid degeneracy of the conditional variance. Moreover, the parameter surfaces need to be smooth across space. This smoothness is crucial for the local estimation procedure and for ensuring uniform convergence of estimators. This assumption allows us to approximate the process locally by a stationary GARCH model and facilitates the application of kernel-based smoothing methods in estimation.

\begin{assumption}\label{ass4} 
The second-order partial derivatives of $ \omega_t(s)$, $\alpha_i(s)$ and $\beta_j(s)$ exist and all their partial derivatives up to the second order are uniformly bounded over $[0,1]^2.$ 
\end{assumption}

Typically for stationary GARCH processes, the result can be proved if the fourth moments of the innovations are finite. 
We require a mildly stronger moment condition on the innovation process that is slightly stronger than the standard GARCH setting. This is necessary to apply central limit results for martingale arrays, which arise due to the locally weighted quasi-likelihood estimation \citep[cf.][, Theorem 3.2]{hall2014martingale}.

\begin{assumption}\label{ass5} 
For some $v>0$, $E\left(\left|\eta _i(s)\right|^{4+v}\right)<\infty$. 
\end{assumption}

The assumption \ref{ass2}, \ref{ass4} and \ref{ass5}  are needed to prove asymptotic normality of the estimators defined below. In geostatistical applications, a common choice for the innovation process $\eta_t(s)$ is a Gaussian process with mean zero and a Matérn covariance function, which satisfies all required conditions. In this case, all moments exist and Assumption~\ref{ass5} is trivially fulfilled.

\subsection{Prediction at Unobserved Locations}\label{sec:kriging}

Before turning to estimation, we first outline the kriging problem for spatially nonstationary spatiotemporal GARCH models as defined in equation~\eqref{eq1}. Our goal is to predict the process $Z_t(s_0)$ at an arbitrary, unobserved location $s_0 \in \mathcal{D}$, based on the available observations $\{Z_t(s_u): t = 1, \dots, T; u = 1, \dots, m\}$. This prediction, known as spatial interpolation or kriging in geostatistics, requires knowledge of the location-specific parameters $\omega(s_0), \alpha_i(s_0), \beta_j(s_0)$ that define the conditional variance $\sigma_t^2(s_0)$. While these parameter surfaces are not known in practice, it is conceptually helpful to first develop the prediction framework under the assumption that they are available. This approach mirrors classical kriging setups, where the spatial covariance structure is assumed known in the derivation of the best linear unbiased predictor. In Section~\ref{sec:estimation}, we return to the problem of estimating the parameters $\omega(\cdot), \alpha_i(\cdot), \beta_j(\cdot)$ via a locally weighted likelihood approach. For now, we focus on the construction of the kriging predictor and its associated uncertainty.

To facilitate prediction, we first derive an ARMA-type representation of the STGARCH model, which enables straightforward computation of linear predictors. We assume that the innovations $\eta_t(s)$ follow a spatially stationary Gaussian process, consistent with the common geostatistical modelling approach. For each fixed $s$, the process $\{Z_t(s)\}$ then constitutes a stationary GARCH($p$,$q$) time series under the conditions in Assumption~\ref{ass2}. More precisely, this implies that the squared process $Z_t^2(s)$ satisfies the following stationary ARMA representation, i.e.,
\begin{equation}\label{eq:arma}
    Z_t^2(s) = \omega(s) + \sum_{i=1}^{r} \delta_i(s) Z_{t-i}^2(s) + \sum_{j=1}^{p} \beta_j(s) \zeta_{t-j}(s) + \zeta_t(s),
\end{equation}
where $r = \max(p,q)$, and $\delta_i(s) = \alpha_i(s) + \beta_i(s)$, with $\alpha_i(s) = 0$ for $i > q$ and $\beta_i(s) = 0$ for $i > p$. The innovation term is defined as $\zeta_t(s) = \sigma_t^2(s)(\eta_t^2(s) - 1)$, which is uncorrelated over time and has zero mean under the assumption $\mathbb{E}[\eta_t^2(s)] = 1$ (follows straightforward from the unit variance of $\eta_t(s)$).

Hence, we can write $Z_t^2(s)$ in the following form 
\begin{equation}
    Z_t^2(s) =  \sum_{j=0}^{\infty} \phi_j(s) \zeta_{t-j}(s) +  C(s)   
\end{equation}
and the sequence of non-parametric functions  $\phi_j(s)$ is given by 
\begin{equation}
    \phi_j(s) =  \beta_j(s) + \sum_{i=1}^j \delta_i(s) \phi_{j-i}(s),\;\;  j \geq 0 
\end{equation}
where $ \beta_0(s)=  1 $ , $ \beta_j(s) =0 $ for $ j >p $ , $\phi_j(s) = 0 $ if $ j < 0 $ . In addition 
\begin{equation}
    C(s) = \omega(s) \Big( 1- \sum_{j=1}^r \delta_j(s) \Big)^{-1}.
\end{equation}
Thus, $Z_t^2(s)$ can be rewritten as follows:
\begin{equation} \label{eq2}
Z_t^2(s) = \sum_{k=0}^{\infty} \nu_k(s) \zeta_{t-k}(s)+ C(s), 
\end{equation}
where $\nu_k(s) =  [A(s)^k]_{1,1}$, with
\begin{equation}
    A(s) = \begin{pmatrix}
\phi_1(s) &\phi_2(s) & \cdots & \phi_{p-1}(s)& \phi_p(s)\\
1 & 0 & \cdots & 0 & 0 \\
0& 1&\ddots & 0& 0 \\
\vdots& \vdots&  \ddots&\vdots& \vdots \\
0 &\cdots& \cdots & 1& 0\end{pmatrix}
\end{equation}
and we assume that 
\begin{equation}
    \sup_{s} |\nu_k(s)| \leq K \lambda^k.
\end{equation}

Let  $\mathbf{Z}=\left[\left\{Z_j^2(s_u): u=1, \ldots, m\right\}_{j=1}^T\right]$ denote the observed squared process across all spatial locations and time points. Based on the ARMA representation in equation~\eqref{eq2}, we evaluate the conditional expectation $\mathbb{E}\left[Z_t^2\left(s_0\right) \mid \mathbf{Z}\right]$ for $p+1 \leq t \leq T$.  To this end, we consider the vector-valued process
\begin{equation}
\underline{Z}_t^2\left(s_0\right)=A\left(s_0\right) \underline{Z}_{t-1}^2\left(s_0\right)+\underline{\zeta}_t\left(s_0\right) + \underline{C}(s_0),
\end{equation}
where
\begin{eqnarray}
\underline{Z}_t^2\left(s_0\right)^{\prime}=\left(Z_t^2\left(s_0\right), \ldots, Z_{t-p+1}^2\left(s_0\right)\right), \; \zeta_t\left(s_0\right)^{\prime}=\left(\zeta_t\left(s_0\right) , \ldots, 0\right), \; \text{and} \nonumber \\
\underline{C}(s_0) =\left(C(s_0) , \ldots, 0\right) . 
\end{eqnarray}

To derive the kriging predictor for the latent volatility process at an unobserved location $s_0$, we consider the conditional expectation $\mathbb{E}[Z_t^2(s_0) \mid \mathbf{Z}]$, which provides the minimum mean-square error prediction of $Z_t^2(s_0)$ given the observed spatiotemporal process. Starting from the ARMA representation in equation~\eqref{eq:arma}, we can express this conditional expectation as
\begin{equation}
   \mathbb{E}\left[Z_t^2\left(s_0\right) \mid \mathbf{Z}\right]=\sum_{k=0}^{t-p-1} \nu_k\left(s_0\right) \mathbb{E}\left[\zeta_{t-k}\left(s_0\right) \mid \mathbf{Z}\right]+\left[A\left(s_0\right)^{t-p} \mathbb{E}\left\{\underline{Z}_p^2\left(s_0\right) \mid \mathbf{Z}\right\}\right]_1 + C(s_0) 
\end{equation}
Here, $\mathbb{E}[\zeta_{t-k}(s_0) \mid \mathbf{Z}]$ is evaluated by conditioning on $\{\zeta_{t-k}(s_u): u = 1, \dots, m\}$ using the spatial covariance structure of the innovation process, which enables interpolation of volatility innovations across space. Specifically, for $p+1 \leq t-k \leq T$, we express
\begin{equation}
\mathbb{E}\left[\zeta_{t-k}\left(s_0\right) \mid \mathbf{Z}\right]=\mathbb{E}\left[\zeta_{t-k}\left(s_0\right) \mid\left\{\zeta_{t-k}\left(s_u\right): u=1, \ldots, m\right\}\right]
\end{equation}
and substitute it into the expression above to obtain 
\begin{eqnarray}
\mathbb{E}\left[Z_t^2\left(s_0\right) \mid \mathbf{Z}\right]= & \sum_{k=0}^{t-p-1} \nu_k\left(s_0\right) \mathbb{E}\left[\zeta_{t-k}\left(s_0\right) \mid\left\{\zeta_{t-k}\left(s_u\right): s=1, \ldots, m\right\}\right] \\
& +\left[A\left(s_0\right)^{t-p} \mathbb{E}\left\{\underline{Z}_p^2\left(s_0\right) \mid \mathbf{Z}\right\}\right]_1     + C(s_0)  .
\end{eqnarray}
Under Assumption  \ref{ass2}, we have the bound $\left\|A\left(s_0\right)^k\right\|_{\text {spec }} \leq K  \lambda^k$, where $\| \cdot\|_{\text{spec}}$ denotes the spectral norm, which means
\begin{equation}\label{eq3}
\mathbb{E}\left[Z_t^2\left(s_0\right) \mid \mathbf{Z}\right] = \sum_{k=0}^{t-p-1} \nu_k\left(s_0\right) \mathbb{E}\left[\zeta_{t-k}\left(s_0\right) \mid\left\{\zeta_{t-k}\left(s_u\right): u=1, \ldots, m\right\}\right]+  C(s_0)   + O_p\left(\lambda^{t-p}\right)
\end{equation}
Further details are provided in Appendix \ref{B}.

Next, we demonstrate that the first term on the right-hand side of \eqref{eq3} can be evaluated recursively. Suppose that the random process $\left\{Z_t^2\left(s_0 \mid m\right)\right\}$ satisfies
\begin{eqnarray}\label{eq4}
Z_t^2\left(s_0 \mid m\right) & = & \omega(s_0) + \sum_{i=1}^r \delta_i(s_0) Z_{t-i}^2(s_0 \mid m) \\
&& + \; \sum_{j=1}^p \beta_j(s_0) \mathbb{E}\left[\zeta_{t-j} \left(s_0\right) \mid \zeta_t\left(s_1\right), \ldots, \zeta_t\left(s_m\right)\right] \nonumber   \\
&& + \; \mathbb{E}\left[\zeta_t\left(s_0\right) \mid \zeta_t\left(s_1\right), \ldots, \zeta_t\left(s_m\right)\right] \nonumber  
\end{eqnarray}
where $Z_t^2\left(s_0 \mid m\right)=0$ for $t \leq p$. We define the $p$-dimensional vectors
$$
\begin{aligned}
\underline{Z}_t^2\left(s_0 \mid m\right)^{\prime} & =\left(Z_t^2\left(s_0 \mid m\right), \ldots, Z_{t-p+1}^2\left(s_0 \mid m\right)\right) \\
\underline{\zeta}_t\left(s_0 \mid m\right)^{\prime} & =\left( \mathbb{E}\left[\zeta_t\left(s_0\right) \mid \zeta_t\left(s_1\right), \ldots, \zeta_t\left(s_m\right)\right], 0, \ldots, 0\right) .
\end{aligned}
$$

We note  that \eqref{eq4} can be written as $\underline{Z}_t^2\left(s_0 \mid m\right)=A\left(s_0\right) \underline{Z}_{t-1}^2\left(s_0 \mid m\right)+\underline{\zeta}_t\left(s_0 \mid m\right) + C(s_0) $. Given the initial condition $\underline{Z}_p^2\left(s_0 \mid m\right)^{\prime}=(0, \ldots, 0)$, the solution for $Z_t^2\left(s_0 \mid m\right)$ is
\begin{equation}
    \sum_{k=0}^{t-p-1} \nu_k\left(s_0\right) \mathbb{E}\left[\zeta_{t-k}\left(s_0\right) \mid\left\{\zeta_{t-k}\left(s_u\right): u=1, \ldots, m\right\}\right]+  C(s_0).
\end{equation}
Comparing $Z_t^2\left(s_0 \mid m\right)$ and \eqref{eq3}, we observe that $Z_t^2\left(s_0 \mid m\right)$ and $\mathbb{E}\left[Z_t^2\left(s_0\right) \mid \mathbf{Z}\right]$ are asymptotically equivalent as $t$ becomes large. As a result, $Z_t^2\left(s_0 \mid m\right)$ can be utilised as an estimator for $\mathbb{E}\left[Z_t^2\left(s_0\right) \mid \mathbf{Z}\right]$.

Evaluating  $Z_t^2\left(s_0 \mid m\right)$ requires determining   $\mathbb{E}\left[\zeta_t\left(s_0\right) \mid \zeta_t\left(s_1\right), \ldots, \zeta_t\left(s_m\right)\right]$, along with the non-parametric functions  $\omega(\cdot), \alpha_j(\cdot), \text{and} \; \beta_i(\cdot)$. Furthermore, the conditional expectation \linebreak $\mathbb{E}\left[\zeta_t\left(s_0\right) \mid\left\{\zeta_t\left(s_u\right)\right\}_{u=1}^m\right]$ can only be computed in closed form if the distribution of \linebreak $\left\{\zeta_t\left(s_u\right): u=0, \ldots, m\right\}$ is known. As the Gaussian process is the most widely adopted distribution in geostatistics, and facilitates closed-form derivations, we consider this special case in more details below. In the case where the innovation process $\{\eta_t(s) \; ;  \;  s \in[0,1]^2 \}$ follow a spatial Gaussian process, then $\zeta_t(s)$ is a spatial centred $\chi^2$-distributed process. We can demonstrate that 
$$
\mathbb{E}\left[\zeta_t\left(s_0\right) \mid\left\{\zeta_t\left(s_u\right)\right\}_{u=1}^m\right]=\sum_{u=1}^m \gamma_j(\underline{s}) \zeta_t\left(s_u\right),
$$
$$  \mathbb{E}\left[Z_t^2\left(s_0\right) \mid \mathbf{Z}\right]   =\sum_{k=0}^{t-p-1} \nu_k\left(s_0\right) \sum_{u=1}^m \gamma_u(\underline{s}) \zeta_{(t-k)}\left(s_u\right) + C(s_0).$$
where
$$
\underline{s}=\left(s_0, \ldots, s_m\right), \quad \gamma(\underline{s})=\left(\gamma_1(\underline{s}), \ldots, \gamma_m(\underline{s})\right)^{\prime}=R(\underline{s})^{-1} r(\underline{s})
$$
with
$$
R(\underline{s})_{i j}=\operatorname{cov}\left(\zeta_t\left(s_i\right), \zeta_t\left(s_j\right)\right)=c_{\zeta}\left(s_i-s_j\right) \text { and } r(\underline{s})_i=\operatorname{cov}\left(\zeta_t\left(s_0\right), \zeta_t\left(s_i\right)\right)=c_{\zeta}\left(s_i-s_0\right) \text {.}
$$
This implies that by utilising estimates of the innovations $\left\{\zeta_t\left(s_u\right)\right\}_t$ at the observed locations, we can estimate the covariance function $c_{\zeta}\left(\cdot\right)$, e.g., using a maximum-likelihood approach. Recall that, as noted in Remark~\ref{rmk:1}, if $\eta_t(s)$ follows a Gaussian process with covariance $C_{\eta}(h)$, then $\eta_t^2(s)-1$ has covariance $2\,C_{\eta}(h)^2$, implying a shorter effective spatial range. Consequently, $\zeta_t(s)$ inherits this reduced range whenever $\sigma_t^2(s)$ varies smoothly over space. In practice, we recommend to back-transform the estimated $\zeta_t(s)$ to $\eta_t(s)$, i.e., $\eta_t(s) = \text{sign}(Z_t(s))\sqrt{\frac{\zeta_t(s)}{\sigma_t^2(s)} + 1}$.

\begin{remark}\label{rmk:2}
In practice, the quantities $\omega(\cdot)$, $\alpha_j(\cdot)$, and $\beta_i(\cdot)$ required in \eqref{eq4} are unknown and must be estimated from the data. One possibility is to estimate all parameter surfaces jointly with the spatial covariance parameters of $\zeta_t(s)$ in a single optimisation step, which fully accounts for the spatio–temporal dependence but can be computationally demanding in high–dimensional settings.  
Alternatively, we suggest adopting a computationally simpler two–stage strategy: we first estimate the local spatiotemporal GARCH parameter surfaces independently at each location, then compute the residuals $\{\eta_t(s)\}$, and finally estimate the spatial covariance parameters from these residuals. This approach is common in geostatistical regression and retains good statistical properties while being substantially faster than the joint estimation alternative.
\end{remark}

Finally, we derive the covariance of the kriging predictor for the squared spatiotemporal GARCH process \( Z_t^2(s_0) \) at an unobserved spatial location $s_0$. Based on the recursive representation and spatial kriging approximation of the innovations, the predictor is given by 
\begin{equation}
    \text{Cov}(\hat{Z}_t^2(s_0), \hat{Z}_{t'}^2(s_0)) 
= \text{Cov}\left( \sum_{k=0}^{t-p-1} \nu_k(s_0) \sum_{u=1}^m \gamma_u(s_0)\, \zeta_{t-k}(s_u),\,
                    \sum_{\ell=0}^{t'-p-1} \nu_\ell(s_0) \sum_{v=1}^m \gamma_v(s_0)\, \zeta_{t'-\ell}(s_v) \right). \notag 
\end{equation}
Since \( C(s_0) \) is nonrandom, it does not contribute to the covariance. Due to the temporal independence of \( \zeta_t(\cdot) \),  and if $ t-k = t'-\ell $ we get       
\begin{equation}
\text{Cov}(\hat{Z}_t^2(s_0), \hat{Z}_{t'}^2(s_0))  = \sum_{k=0}^{t-p-1} \sum_{\ell=0}^{t'-p-1} \nu_k(s_0)\nu_\ell(s_0)
    \sum_{u=1}^m \sum_{v=1}^m \gamma_u(s_0)\gamma_v(s_0)\,
    \text{Cov}(\zeta_{t-k}(s_u), \zeta_{t'-\ell}(s_v)).
\end{equation}
which simplifies to 
\begin{equation}
\text{Cov}(\hat{Z}_t^2(s_0), \hat{Z}_{t'}^2(s_0)) = \sum_{k=0}^{t-p-1} \sum_{\ell=0}^{t'-p-1} \nu_k(s_0)\nu_\ell(s_0) \sum_{u=1}^m \sum_{v=1}^m \gamma_u(s_0)\gamma_v(s_0)\, R_{uv}.
\end{equation}
The resulting expression yields the closed-form covariance of the kriging predictor for $Z_t^2(s_0)$, determined by the temporal ARMA coefficients $\nu_k(s_0)$, the spatial kriging weights $\gamma_u(s_0)$, and the spatial covariance matrix $R$ of the innovation process. Since these quantities depend on unknown parameter surfaces $\omega(\cdot)$, $\delta_i(\cdot)$, $\beta_j(\cdot)$ as well as on the spatial covariance of $\zeta_t(s)$, their practical implementation requires consistent estimation from the observed data. The following section outlines the estimation strategy for these components, which provides the foundation for constructing the kriging predictor in applied settings.

\subsection{Estimation of Parameter Surface}\label{sec:estimation}

To estimate the spatially varying parameters of the spatiotemporal process, we employ a nonparametric regression framework based on localised estimation. Since $Z_t(s_0)$ is unobserved at arbitrary locations $s_0$, classical approaches such as ordinary least squares are not directly applicable. Nevertheless, assuming smooth spatial variation of the model parameters (Assumption \ref{ass4}), nonparametric techniques allow their estimation at unobserved locations.

Inspired by local likelihood methods in nonparametric regression~\cite{tibshirani1987local} and the locally stationary framework developed by~\cite{Dahlhaus}, we adopt a weighted quasi-maximum likelihood approach. This method incorporates spatial weights, giving higher importance to observations near the prediction location, and allows for flexible estimation of spatially varying GARCH parameters while accounting for the complexities of spatiotemporal dependence.

Given the smoothness assumption in Assumption \ref{ass4}, we estimate the parameters at $s_0$ using a localised likelihood approach that weights observations from nearby locations more heavily. Specifically, we define the local log-likelihood
\begin{equation}
\mathcal{L}_{ T}\left(s_0, \Theta\right)= \frac{1}{m T'} \sum_{t=p+1}^T \sum_{u=1}^m W_b\left(s_0-s_u\right) I_{t} ( \Theta), 
\end{equation}
with
\begin{equation}
I_{t} ( \Theta) =\frac{1}{2} \left( \log \sigma_t^2(s_u) + \frac{Z_t(s_u)^2}{\sigma_t^2(s_u)} \right),
\end{equation}
where $\Theta = \left(\omega \left(s_0\right), \alpha_1(s_0)\ldots, \alpha_p\left(s_0\right) , \beta_1(s_0), \dots \beta_p(s_0)\right)$ and  ($ \alpha_{q+1}=0, \alpha_{q+2}=0 ,\dots, \alpha_{p}=0)$, $T'$ is the number of effective time points, i.e., $T' = T-p-1$.

The spatial weighting function is constructed from a kernel $K: \mathbb{R}^2 \rightarrow \mathbb{R}$ satisfying
\begin{equation}
    \int K(x) \mathrm{d} x=1, \quad W(s)=K(x) K(y) \text { and } W_b(.)=\frac{1}{b} W\left(\frac{.}{b}\right) . 
\end{equation}
The local parameter estimate $\hat{\Theta}(s_0)$ is then obtained as
\begin{equation}\label{eq:estiamtor}
\hat{\Theta}(s_0) = \arg \min_{\Theta} \mathcal{L}_{ T}\left(s_0, \Theta\right).
\end{equation}
This localised estimation procedure is computationally far more efficient than a full joint estimation of all spatial and temporal parameters (see Remark \ref{rmk:2}) and scales well to large spatial datasets.

The localised likelihood in \eqref{eq:estiamtor} provides pointwise estimates of the smoothly varying parameter functions at any location $s_0$. Having established the estimation framework, we next investigate its theoretical properties, focusing on consistency and asymptotic normality under the assumptions stated in Section~\ref{assumptions}.

\subsection{Properties of the Estimator}\label{sec:properties}

In this section, we analyse the theoretical properties of the weighted quasi-maximum likelihood estimator $\hat{\Theta}(s_0)$ for the spatiotemporal GARCH model. We work within a purely temporal asymptotic framework, treating the spatial grid of fixed size $m$ as fixed and letting the time series length $T \to \infty$. Under the previously stated regularity conditions, Theorem~\ref{consistency} establishes the consistency of $\hat{\Theta}(s_0)$, while Theorem~\ref{normality} derives its asymptotic normality.

\begin{theorem}[Consistency]\label{consistency}
Let \(\{Z_t(s_u):\,t=1,\dots,T,\;u=1,\dots,m\}\) be a spatiotemporal GARCH process satisfying Assumptions \ref{assumptions}.  Let
$$
\hat\Theta(s_0)
\;=\;
\arg\min_{\Theta}\;
\mathcal L_T\bigl(s_0,\Theta\bigr),
$$
be the local weighted quasi-likelihood estimator defined before. Suppose $m$ is fixed and $ b\rightarrow 0 \; \text{and}\; \;  bmT \rightarrow\;\infty \;  \text{as} \;  \;  T\rightarrow\infty,$
we have $$
\hat\Theta(s_0)\;\xrightarrow{\mathcal{P}}\;\Theta(s_0).
$$
\end{theorem}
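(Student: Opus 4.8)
The plan is to treat $\hat\Theta(s_0)$ as an M-estimator and follow the classical three-part recipe for consistency of extremum estimators: (i) identify a nonrandom population criterion $\mathcal L(s_0,\cdot)$ towards which $\mathcal L_T(s_0,\cdot)$ converges, (ii) strengthen this to uniform convergence over the compact set $\Omega$, and (iii) show that $\mathcal L(s_0,\cdot)$ is uniquely minimised at the true value $\Theta(s_0)$. Once (i)--(iii) are in place, compactness of $\Omega$ (Assumption \ref{ass2}) together with the standard argmin-continuity argument delivers $\hat\Theta(s_0)\xrightarrow{\mathcal P}\Theta(s_0)$. I would take as population criterion the frozen-coefficient Gaussian quasi-likelihood built from the locally stationary approximation $\widetilde Z_t(s_0)=\widetilde\sigma_t(s_0)\eta_t(s_0)$ introduced after \eqref{eq1}, namely $\mathcal L(s_0,\Theta)=\tfrac12\,E\big[\log\widetilde\sigma_t^2(\Theta)+\widetilde Z_t^2(s_0)/\widetilde\sigma_t^2(\Theta)\big]$, where $\widetilde\sigma_t^2(\Theta)$ is the conditional variance implied by a candidate $\Theta$.

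The convergence in (i)--(ii) would be obtained by decomposing $\mathcal L_T(s_0,\Theta)-\mathcal L(s_0,\Theta)$ into a spatial bias and a stochastic fluctuation. For the bias, I would replace the genuinely nonstationary $\sigma_t^2(s_u)$ by its frozen-coefficient version at $s_0$; Taylor-expanding the parameter surfaces and invoking the bounded second derivatives of Assumption \ref{ass4} gives $\Theta(s_u)-\Theta(s_0)=O(\|s_u-s_0\|)$, and since $W_b$ concentrates its mass on locations within $O(b)$ of $s_0$, this contribution is $O(b)\to 0$. The fluctuation is a kernel-weighted temporal average of summands that, after the replacement, are strictly stationary and ergodic (Assumption \ref{ass1}); the ergodic theorem yields convergence to $\mathcal L(s_0,\Theta)$, and the normalisation $(mT')^{-1}$ together with $bmT\to\infty$ forces the variance of the average to vanish. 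Uniformity over $\Theta\in\Omega$ then follows from compactness of $\Omega$ plus equicontinuity of $\Theta\mapsto I_t(\Theta)$ and the integrable envelope $E[\sup_\Theta|I_t(\Theta)|]<\infty$, which the moment condition of Assumption \ref{ass5} and the lower bound $\inf_s\sigma_t^2(s)>\rho_1$ of Assumption \ref{ass2} make available; a finite-cover argument upgrades pointwise to uniform convergence.

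The identification step (iii) is the GARCH-specific core. Writing $\sigma_t^2$ for the true conditional variance and $\widetilde\sigma_t^2(\Theta)$ for the candidate one, the elementary inequality $x-\log x\ge 1$ with equality iff $x=1$, combined with $E[Z_t^2\mid\mathcal F_{t-1}]=\sigma_t^2$, gives $\mathcal L(s_0,\Theta)-\mathcal L(s_0,\Theta(s_0))=\tfrac12\,E\big[\sigma_t^2/\widetilde\sigma_t^2(\Theta)-1-\log(\sigma_t^2/\widetilde\sigma_t^2(\Theta))\big]\ge 0$, with equality iff $\widetilde\sigma_t^2(\Theta)=\sigma_t^2$ almost surely. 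The remaining implication $\widetilde\sigma_t^2(\Theta)=\sigma_t^2\Rightarrow\Theta=\Theta(s_0)$ is the identifiability of the GARCH$(p,q)$ parametrisation, which I would establish from the coprimeness of the AR and MA polynomials in the ARMA representation \eqref{eq:arma} and the interior condition in Assumption \ref{ass2}.

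The \emph{main obstacle} is the initialisation of the volatility recursion: $\sigma_t^2(s_u)$ and its candidate counterpart $\widetilde\sigma_t^2(\Theta)$ depend on the infinite past, whereas the computable likelihood uses a version started at a finite time. I would show this initialisation error is asymptotically negligible uniformly in $\Theta\in\Omega$ by exploiting the contraction embedded in Assumption \ref{ass2}: the condition $\sum_i\alpha_i(s)+\sum_j\beta_j(s)\le 1-\gamma<1$ forces the companion-matrix norm to satisfy $\|A(s_0)^k\|_{\mathrm{spec}}\le K\lambda^k$ with $\lambda<1$ (exactly the bound already used in Section~\ref{sec:kriging}), so the gap between initialised and stationary conditional variances decays geometrically; dividing by the uniform floor $\rho_1$ then shows its contribution to $\mathcal L_T(s_0,\Theta)$ vanishes uniformly over $\Omega$. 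Carrying this geometric-decay control simultaneously with the frozen-coefficient spatial approximation of the second paragraph is the delicate part; once it is in place, assembling (i)--(iii) yields the claimed convergence in probability.
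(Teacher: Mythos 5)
Your proposal is correct and follows essentially the same route as the paper: your spatial-bias/stochastic-fluctuation decomposition is exactly the paper's split of $\mathcal{L}_T$ into the frozen-coefficient stationary criterion $\widetilde{\mathcal{L}}_T$ (handled by the ergodic theorem and an equicontinuity/compactness argument on $\Omega$) plus the nonstationarity gap $\mathcal{B}_T$ (controlled via the smoothness in Assumption~\ref{ass4} and the locally stationary approximation bounds of Appendix~\ref{B}), followed by the standard argmin-continuity step. The only deviations are places where you supply detail the paper outsources or leaves implicit: the paper cites Lemma~5.5 of \cite{Horv} for uniqueness of the minimiser where you prove identification via the $x-\log x\ge 1$ inequality and ARMA coprimeness, and the initialisation error of the volatility recursion, which you control through the geometric bound $\left\|A(s_0)^k\right\|_{\mathrm{spec}}\le K\lambda^k$, is not explicitly treated in the paper's proof.
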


The proof can be found in the Appendix. Having established the consistency, we now turn to the distributional behaviour of the estimator. The next theorem shows that $\hat\Theta(s_0)$ is asymptotically normal with a bias term of order $b^2$ due to the kernel smoothing.
 
\begin{theorem}[Asymptotic Normality]\label{normality}
Let $\{Z_t(s_u):\, t = 1,\dots,T,\; u = 1,\dots,m\}$ be a spatiotemporal GARCH process satisfying Assumptions \ref{assumptions}, and let $\hat\Theta(s_0)$ be the local quasi-likelihood estimator defined as
$$
\hat\Theta(s_0) = \arg \min_{\Theta } \mathcal{L}_T(s_0, \Theta),
$$
Then, we have the following:\\
$$
\sqrt{T'}\left(\hat{\Theta}(s_0)-\Theta(s_0)\right) + \sqrt{T'} \Sigma(s_0)^{-1}  \frac{1}{2m}\, b^2\, w^{(2)}\, \nabla_s^2 I \left(s_0, \Theta(s_0)\right)
\;\xrightarrow{d}\;
\mathcal{N}\left(0, \mu^2 \operatorname{Var}\left(\eta_t^2(s)\right) \Sigma(s_0)^{-1} \right),$$
where:
$$
\begin{aligned}
\mu^2 &= \sum_{u=1}^m \sum_{t=p+1}^{T} \frac{W_b(s_0 - s_u)^2}{2 m^2}, \\
\Sigma(s_0) &= \frac{1}{2} \mathbb{E}\left[ \frac{\nabla \widetilde{\sigma}_t^2(s)\, \nabla \widetilde{\sigma}_t^2(s)^\top}{\widetilde{\sigma}_t^4(s)} \right],\\   w^{(2)} &= \int_{} \| s-s_0\|^2 W( s-s_0)\, ds .
\end{aligned}$$
\end{theorem}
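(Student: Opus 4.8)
The plan is to treat $\hat\Theta(s_0)$ as an M-estimator and to proceed by the standard score-expansion argument, adapted to the locally weighted, spatially averaged quasi-likelihood and to the locally stationary approximation introduced in Section~\ref{sec:theory}. First I would exploit the first-order condition: since $\hat\Theta(s_0)$ minimises $\mathcal{L}_T(s_0,\cdot)$ and, by consistency (Theorem~\ref{consistency}) together with Assumption~\ref{ass2}, eventually lies in the interior of $\Omega$, the score vanishes, $\nabla_\Theta \mathcal{L}_T(s_0,\hat\Theta(s_0))=0$. A mean-value expansion of the score around the true local parameter $\Theta(s_0)$ then gives
$$
0 = \nabla_\Theta \mathcal{L}_T(s_0,\Theta(s_0)) + \nabla_\Theta^2 \mathcal{L}_T(s_0,\bar\Theta)\bigl(\hat\Theta(s_0)-\Theta(s_0)\bigr),
$$
for some $\bar\Theta$ on the segment joining $\hat\Theta(s_0)$ and $\Theta(s_0)$, so that
$$
\sqrt{T'}\bigl(\hat\Theta(s_0)-\Theta(s_0)\bigr) = -\bigl[\nabla_\Theta^2 \mathcal{L}_T(s_0,\bar\Theta)\bigr]^{-1}\sqrt{T'}\,\nabla_\Theta \mathcal{L}_T(s_0,\Theta(s_0)).
$$
The proof then splits into (i) showing the Hessian converges in probability to $\Sigma(s_0)$, and (ii) decomposing the scaled score into a mean-zero martingale term obeying a central limit theorem and a deterministic bias term of order $b^2$.

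For the Hessian I would invoke Theorem~\ref{consistency} to guarantee $\bar\Theta \xrightarrow{\mathcal{P}} \Theta(s_0)$, then establish a uniform law of large numbers for $\nabla_\Theta^2 \mathcal{L}_T(s_0,\Theta)$ over the compact set $\Omega$. Here the locally stationary approximation is essential: I would replace the genuine, spatially nonstationary volatilities $\sigma_t^2(s_u)$ by their stationary surrogates $\widetilde{\sigma}_t^2$ driven by the constant parameters $\Theta(s_0)$ (the approximation of Section~\ref{sec:theory} and Appendix~\ref{B}), controlling the replacement error uniformly in $\Theta$, and then apply the ergodic theorem to the stationary process. Upon taking expectations, the terms proportional to $(\eta_t^2-1)$ vanish, since $\eta_t$ has unit second moment and is independent of the natural filtration $\mathcal{F}_{t-1}$ generated by the past, leaving precisely $\Sigma(s_0)=\tfrac12\,\mathbb{E}\bigl[\nabla\widetilde{\sigma}_t^2\,\nabla\widetilde{\sigma}_t^{2\top}/\widetilde{\sigma}_t^4\bigr]$. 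The effect of finite-sample initialisation of the volatility recursion decays geometrically under Assumption~\ref{ass2} and is asymptotically negligible.

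For the score, evaluating the gradient at the truth yields
$$
\nabla_\Theta I_t(\Theta(s_0)) = -\frac{1}{2}\,\frac{\nabla_\Theta \sigma_t^2(s_u)}{\sigma_t^2(s_u)}\bigl(\eta_t^2(s_u)-1\bigr),
$$
and because $\eta_t^2(s_u)-1$ has zero mean, is independent over $t$, and $\nabla_\Theta\sigma_t^2(s_u)/\sigma_t^2(s_u)$ is $\mathcal{F}_{t-1}$-measurable, the spatially weighted sum $\xi_t = \tfrac{1}{m}\sum_{u} W_b(s_0-s_u)\,\nabla_\Theta I_t(\Theta(s_0))$ is a martingale difference sequence in $t$ (contemporaneous spatial correlation of the innovations affects only the within-$t$ variance, not the martingale property). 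I would then apply the martingale-array central limit theorem of \citet{hall2014martingale}, Theorem~3.2, to $T'^{-1/2}\sum_t \xi_t$, checking (a) convergence of the sum of conditional variances, which, after factorising $\mathbb{E}[(\eta_t^2-1)^2]=\operatorname{Var}(\eta_t^2)$ and collecting the squared weights, delivers the covariance $\mu^2\operatorname{Var}(\eta_t^2)\Sigma(s_0)^{-1}$ once premultiplied by $\Sigma(s_0)^{-1}$, and (b) a conditional Lindeberg condition, obtained from the strengthened moment bound $E|\eta_t|^{4+v}<\infty$ of Assumption~\ref{ass5} through a Lyapunov argument. The bias arises because the score is centred at $\Theta(s_0)$ while the data at $s_u$ are generated by $\Theta(s_u)$; writing the expected score as a function of location and Taylor-expanding the smooth surfaces (Assumption~\ref{ass4}) around $s_0$, the zeroth-order term vanishes by the per-location first-order condition, the first-order term vanishes by kernel symmetry, and the second-order term, after the change of variables and using $w^{(2)}=\int\|s-s_0\|^2 W(s-s_0)\,ds$, produces $\tfrac{1}{2m}b^2 w^{(2)}\nabla_s^2 I(s_0,\Theta(s_0))$, which transfers to the left-hand side through $-\Sigma(s_0)^{-1}$ exactly as in the statement.

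The hard part will be the joint handling of the vanishing bandwidth $b\to 0$ and the growing sample in the martingale CLT: the weights $W_b(s_0-s_u)$ blow up like $b^{-1}$ on a shrinking effective neighbourhood, so the summands genuinely form a triangular array, and one must track the interplay of $b$, $m$, and $T'$ under the rates $b\to 0$ and $bmT\to\infty$ to ensure the conditional variances stabilise and the Lindeberg condition holds. A secondary but delicate point is to control the locally stationary approximation error \emph{uniformly} in $\Theta$ over $\Omega$, so that both the Hessian limit and the score variance are computed against the stationary surrogate $\widetilde{\sigma}_t^2$ rather than the true nonstationary volatilities, and to verify that the spatial cross-covariances of the innovations enter the limit only through the diagonal contribution encoded in $\mu^2$.
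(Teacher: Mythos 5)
Your proposal follows essentially the same route as the paper's proof: the first-order condition and Taylor expansion of the score, a decomposition into a martingale-difference part handled by the Hall--Heyde martingale CLT (yielding the covariance $\mu^2\operatorname{Var}(\eta_t^2(s))\,\Sigma(s_0)^{-1}$) and a bias part arising from the locally stationary surrogate, which after a spatial Taylor expansion with the zeroth- and first-order terms vanishing by centering and kernel symmetry produces the $\tfrac{1}{2m}\,b^2\,w^{(2)}\,\nabla_s^2 I\left(s_0,\Theta(s_0)\right)$ term transferred through $\Sigma(s_0)^{-1}$ -- this is exactly the paper's split of $\nabla\mathcal{L}$ into $\nabla\widetilde{\mathcal{L}}$ plus $\nabla\mathscr{B}_t$. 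The differences are only presentational: you are more careful than the paper about the mean-value form of the Hessian at an intermediate point $\bar\Theta$ and its uniform convergence to $\Sigma(s_0)$, steps the paper leaves implicit.
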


Together, Theorems~\ref{consistency} and \ref{normality} provide a rigorous justification for the localised likelihood estimation procedure introduced in the previous subsection. The second term in the asymptotic expansion of Theorem~\ref{normality} reflects a bias that arises from the deviation of the true process from its stationary approximation over the localised spatial segment. This bias depends on the curvature of the spatial parameter functions and quantifies how closely the process resembles a spatially stationary one in the neighbourhood of $s_0$.
It is therefore directly controlled by the bandwidth $b$, highlighting the bias–variance trade-off introduced by spatial kernel smoothing.

\section{Monte Carlo Simulation Study}\label{sec:MC}

We assess the performance of the proposed STGARCH estimator and predictor through a Monte Carlo simulation with \(MC = 100\) replications. In each replication, we independently and uniformly sample \(n_1 \in \{50, 100, 150\}\) training sites (\(S_1\)) and \(n_2 = 50\) prediction sites (\(S_2\)) from the unit square \([0,1]^2\). A fixed \(10 \times 10\) grid (\(S_0\)) is additionally used for evaluating parameter recovery on a dense spatial grid. We consider time series lengths \(T \in \{100, 200, 300\}\), where each spatially varying coefficient function \(\omega(s)\), \(\alpha(s)\), and \(\beta(s)\) is represented using a B-spline basis, with coefficients randomly drawn in each replication to ensure varying parameter surfaces across runs. The STGARCH(1,1) processes are simulated with spatially varying parameters and a spatially stationary Gaussian innovation process \(\eta_t(s) \sim \mathcal{N}(0, 1)\) following an exponential covariance function with range parameter \(\theta = 0.5\). The range parameter is estimated via the proposed weighted quasi-maximum likelihood method applied to residuals, yielding an average estimate of \(\hat{\theta} = 0.4992\) across replications (with $n_1 = 150$ and $T = 300$).

For each \(s_i \in S_1\), the STGARCH coefficient functions are estimated using weighted quasi-maximum likelihood with a uniform kernel and bandwidth \(b = n_1^{-1/4}\). The estimated parameters are then used to spatially predict unconditional volatilities at the locations in \(S_2\) via kriging, i.e., out-of-sample. The size of $S_2$ was chosen to be $n_2 = 50$. This is implemented as a two-step procedure: first, \(\omega(s)\), \(\alpha(s)\), and \(\beta(s)\) are estimated at each \(s_i\) via  the weighted localised QMLE, and second, the innovations \(\eta_t(s)\) and the derived \(\zeta_t(s)\) are computed for use in the kriging procedure. For reporting accuracy, we compute Monte Carlo (MC) averages of bias and RMSE of all parameters across simulation runs and spatial locations. That is, we define
\[
\mathrm{Bias}(\theta)
=\frac{1}{MC\,n_2}\sum_{r=1}^{MC}\sum_{s\in S_2}\bigl(\hat \theta^{(r)}(s)-\theta(s)\bigr),
\qquad
\mathrm{RMSE}(\theta)
=\left\{\frac{1}{MC\,n_2}\sum_{r=1}^{MC}\sum_{s\in S_2}\bigl(\hat \theta^{(r)}(s)-\theta(s)\bigr)^2\right\}^{1/2}.
\]
These metrics are reported separately for \(\theta \in\{\omega,\alpha,\beta\}\). Table~\ref{tab:bias-rmse-full} reports the average bias and RMSE for the estimated nonparametric functions \(\omega\), \(\alpha\), and \(\beta\) on \(S_2\), as well as for the predicted unconditional volatilities.


\begin{table}[ht]
\centering
\caption{Out-of-sample bias and RMSE of estimated nonparametric functions \(\omega(s)\), \(\alpha(s)\), \(\beta(s)\), and predicted unconditional volatilities at \(S_2\) over \(MC=100\) simulations.}\label{tab:bias-rmse-full}
\begin{tabular}{cc cc cc cc cc}
\hline
\multirow{2}{*}{\(n_1\)} & \multirow{2}{*}{\(T\)} 
& \multicolumn{2}{c}{\(\omega\)} 
& \multicolumn{2}{c}{\(\alpha\)} 
& \multicolumn{2}{c}{\(\beta\)} 
& \multicolumn{2}{c}{Volatility} \\
& & Bias & RMSE & Bias & RMSE & Bias & RMSE & Bias & RMSE \\
\hline
  & 100 & -0.0216 & 0.0784 & 0.0067 & 0.0774 & 0.0603 & 0.2126 & 0.0092 & 0.1205 \\
50  & 200 & -0.0308 & 0.0706 & 0.0042 & 0.0597 & 0.0827 & 0.1894 & 0.0042 & 0.1084 \\
  & 300 & -0.0348 & 0.0684 & 0.0047 & 0.0533 & 0.0922 & 0.1766 & 0.0037 & 0.1067 \\[.2cm]
 & 100 & -0.0177 & 0.0819 & 0.0073 & 0.0767 & 0.0447 & 0.2185 & 0.0020 & 0.1168 \\
100 & 200 & -0.0269 & 0.0718 & 0.0078 & 0.0605 & 0.0691 & 0.1850 & 0.0008 & 0.1096 \\
 & 300 & -0.0308 & 0.0651 & 0.0063 & 0.0527 & 0.0810 & 0.1671 & 0.0006 & 0.1058 \\[.2cm]
 & 100 & -0.0136 & 0.0819 & 0.0072 & 0.0820 & 0.0521 & 0.2182 & 0.0115 & 0.1231 \\
150 & 200 & -0.0240 & 0.0706 & 0.0056 & 0.0614 & 0.0698 & 0.1852 & 0.0002 & 0.1113 \\
 & 300 & -0.0270 & 0.0658 & 0.0058 & 0.0550 & 0.0771 & 0.1727 & 0.0004 & 0.1097 \\
\hline
\end{tabular}
\label{tab:bias-rmse-full}
\end{table}

To further illustrate the predictive performance of the proposed method, Figure~\ref{fig:spatial-volatility} compares the predicted and true unconditional volatility surfaces over the spatial grid \(S_0\). The background colour gradient represents the predicted values, with red dashed contours indicating the predicted level sets and black solid contours representing the true ones. The close alignment of these contours confirms that the estimator successfully recovers the underlying spatial volatility pattern. Complementing this spatial view, Figure~\ref{fig:temporal-volatility} presents the temporal evolution of the true and predicted conditional variances at three randomly selected out-of-sample locations in \(S_2\). The model closely follows both the timing and magnitude of volatility spikes, demonstrating its strong ability to capture dynamic features of the process across time and space.

\begin{figure}
  \centering
  \includegraphics[width=0.7\textwidth]{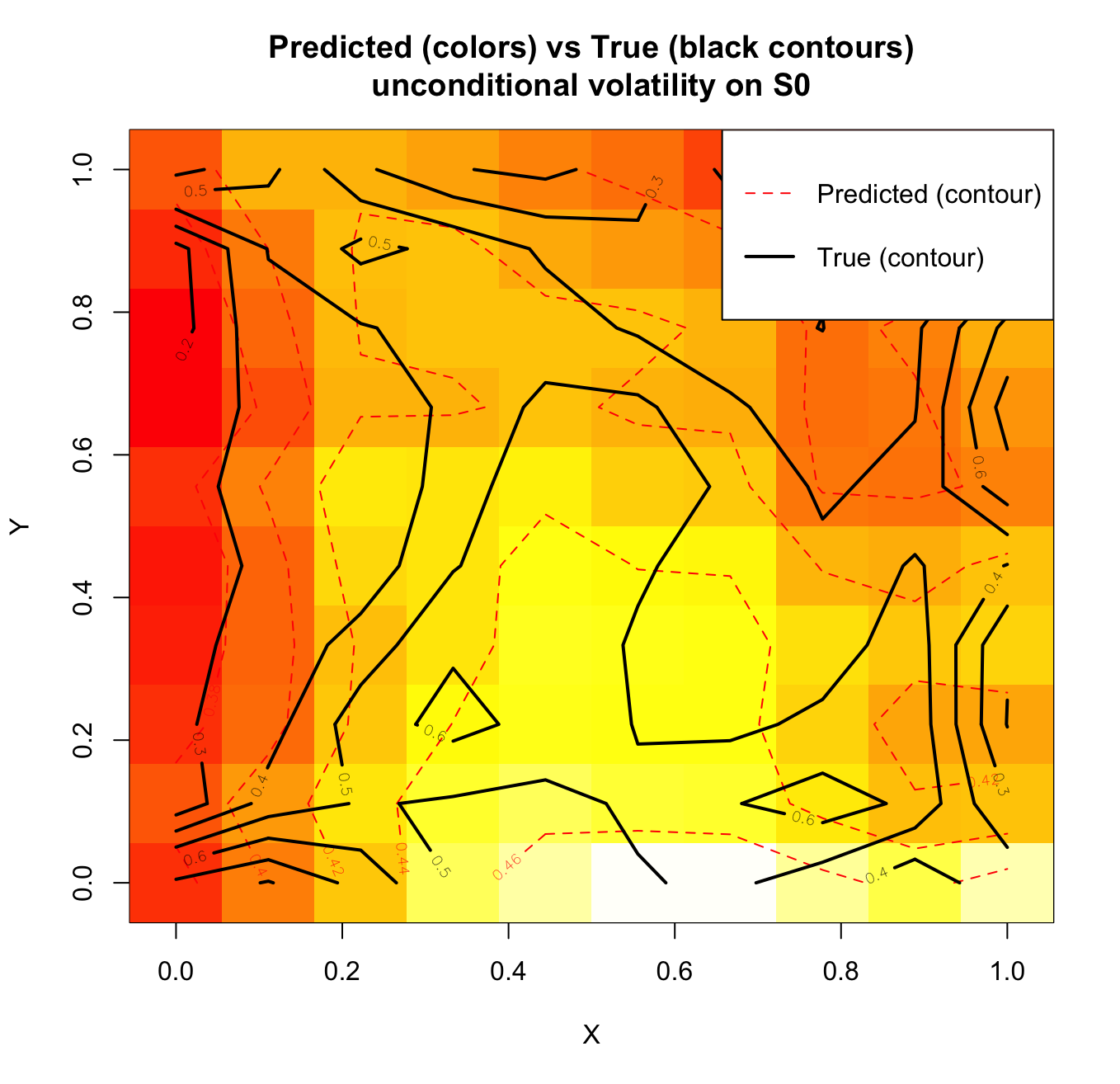}
  \caption{Predicted (background gradient, red dashed contours) and true (black solid contours) unconditional volatility surfaces on the spatial grid \(S_0\) for one replication.}
  \label{fig:spatial-volatility}
\end{figure}

\begin{figure}
  \centering
  \includegraphics[width=0.7\textwidth]{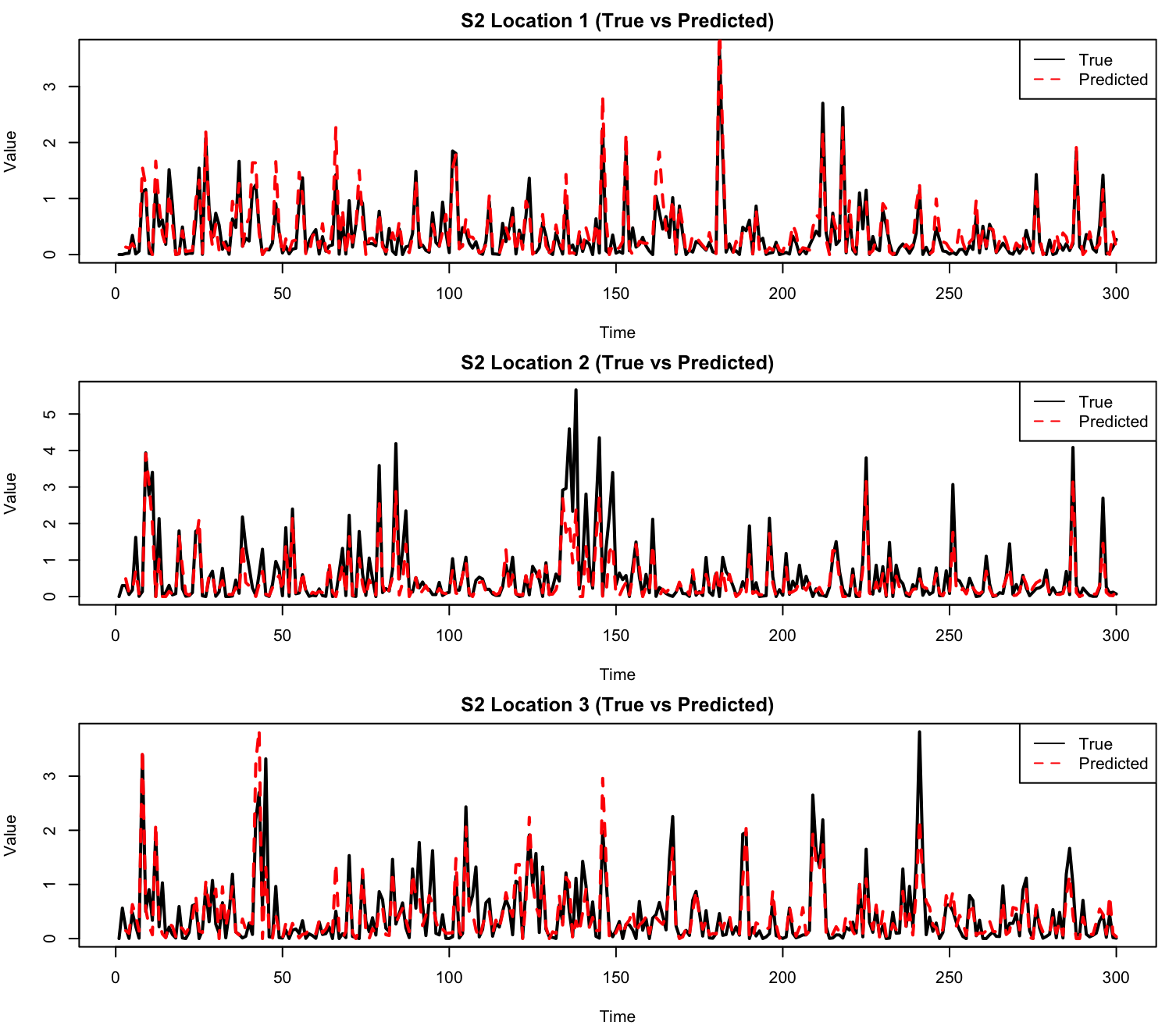}
  \caption{Temporal evolution of true (black) and predicted (red) conditional variances at three out-of-sample locations in \(S_2\). The predictions closely track both the timing and magnitude of volatility spikes.}
  \label{fig:temporal-volatility}
\end{figure}

Overall, the simulation results demonstrate that the proposed STGARCH estimation and prediction framework reliably recovers spatially varying parameters and accurately predicts both unconditional and conditional volatilities across a range of spatial sample sizes and time series lengths. Bias and RMSE values decrease as either \(n_1\) or \(T\) increases, reflecting the expected gains from richer spatial and temporal information. Spatial reconstruction closely matches the true volatility surfaces, and temporal predictions successfully capture both the timing and magnitude of volatility changes at unobserved locations. These findings confirm that the localised estimation-kriging approach offers both high predictive accuracy and computational efficiency, making it a practical tool for large-scale spatiotemporal volatility modelling.

\section{Volatility Prediction across Financial Networks}\label{sec:application}

To demonstrate the utility of the proposed spatiotemporal GARCH model with spatially varying coefficients and spatially correlated innovations, we apply it to the problem of volatility modelling in a network of financial assets. Our setting is non-standard from a geostatistical perspective: the spatial domain is not physical space but an abstract ``financial space'' constructed from firm-specific balance sheet features. This allows us to embed firms into a latent space in which spatial proximity reflects economic similarity, and where spatial dependence can be exploited for volatility prediction. 

The financial dataset consists of 50 firms traded on public markets, observed over 276 trading days in a single year. The full list of the 50 companies included in the analysis, spanning various sectors and geographic regions, is provided in Table~\ref{tab:stocks}. For each firm, 66 balance sheet variables were extracted. After removing variables with more than 5\% missing values, 44 features remained. Each feature was standardised across firms, and missing values were imputed with zero, effectively corresponding to the assumption that the mean of each feature is zero in the standardised space. This ensures that the feature space is centred and that the Euclidean distances are interpretable as deviations from the average firm. To provide a descriptive overview of the input data, we visualised the distributions of daily log-returns and absolute log-returns for all 50 firms (see Figure \ref{fig:data}). The boxplots in Figure 0a illustrate that log-returns are approximately symmetric around zero for most firms, but exhibit substantial heterogeneity in scale and tail behaviour. This variation motivates the use of firm-specific volatility dynamics in our modelling approach. Figure 0b displays the distribution of absolute log-returns, which serve as a proxy for realised volatility. Here, the differences across firms are even more pronounced, reflecting variation in baseline volatility levels, sensitivity to market conditions, or idiosyncratic shocks. These plots underscore the necessity of a model that allows for spatially varying volatility structures, as pursued in our spatiotemporal GARCH specification.

\begin{figure}
\centering
\includegraphics[width = 0.45\textwidth]{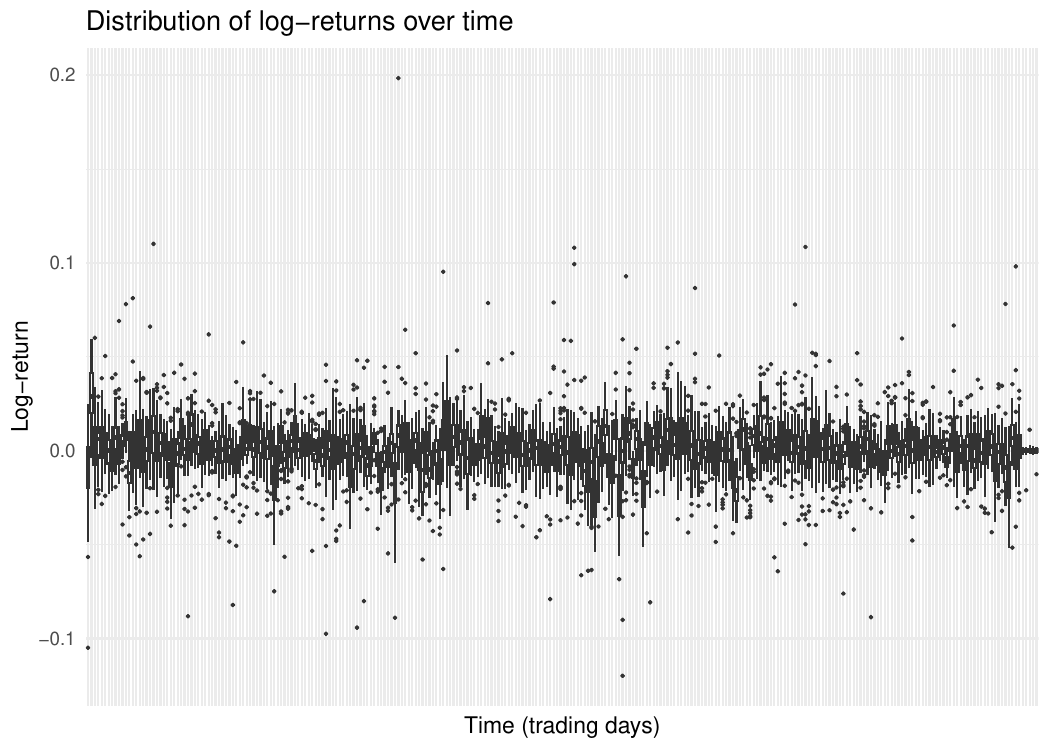}
\includegraphics[width = 0.45\textwidth]{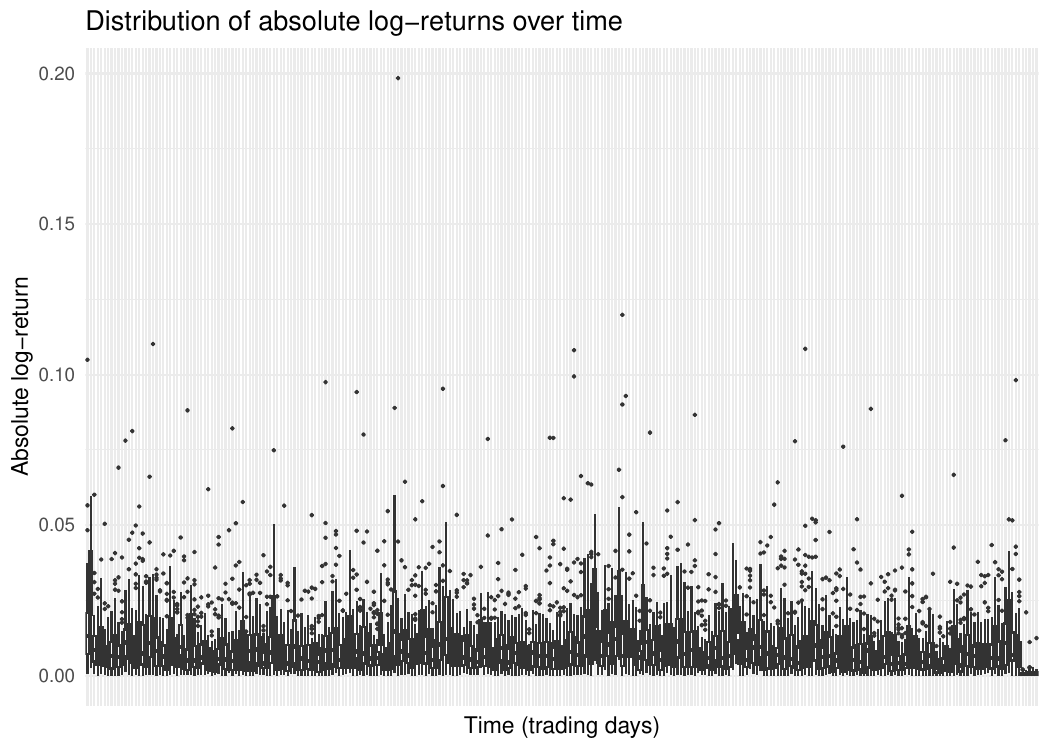}\\
\includegraphics[width = 0.45\textwidth]{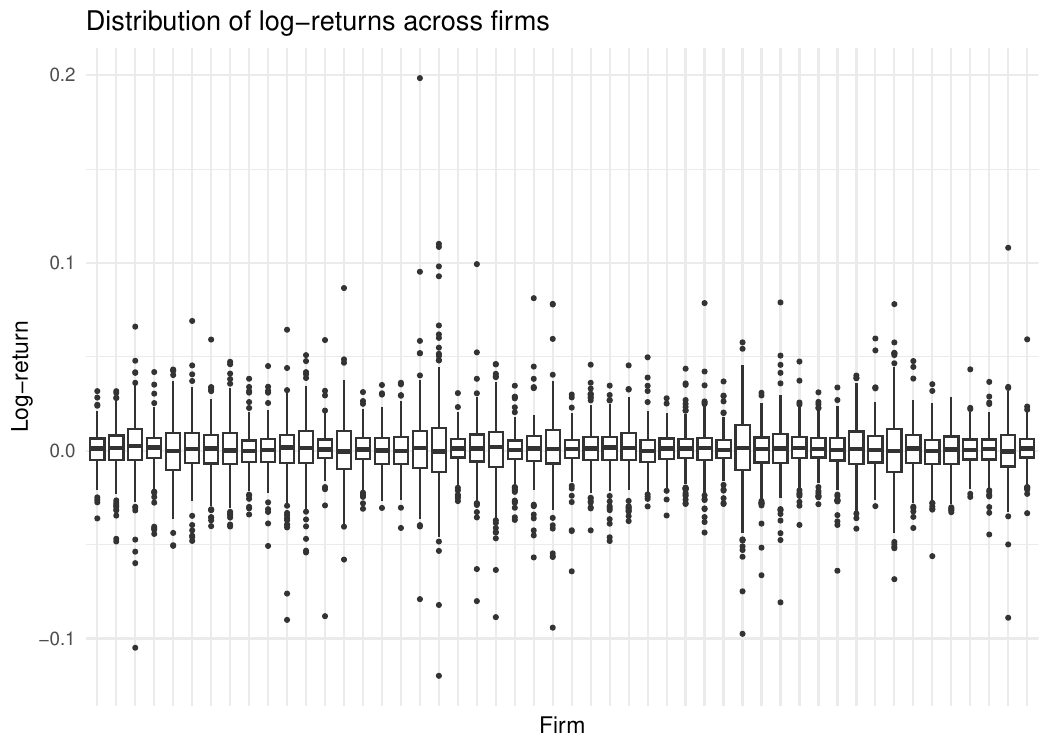}
\includegraphics[width = 0.45\textwidth]{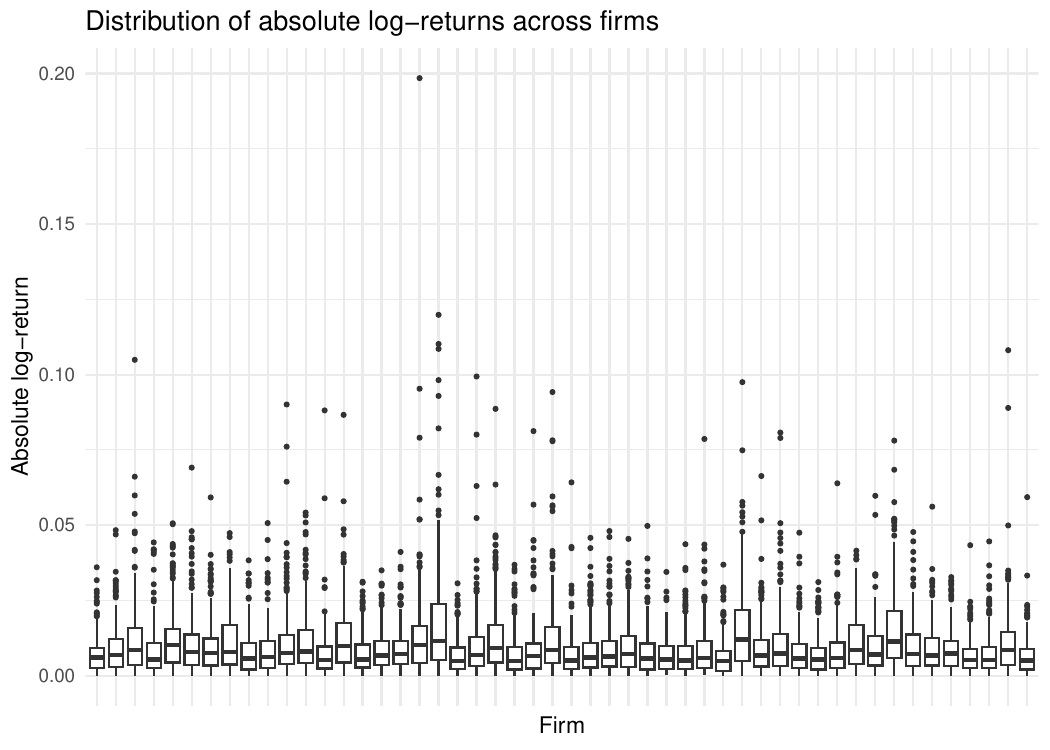}
\caption{Boxplots of log-returns (left panels) and absolute log-returns (right panels), displayed over time (top row) and across firms (bottom row). The absolute log-returns serve as a proxy for volatility. The top row visualises temporal variation and clustering in volatility, while the bottom row reveals differences in return distributions across firms.}\label{fig:data}	
\end{figure}

To reconstruct the financial network structure, we followed the strategy proposed by \citep{fulle2024spatial}, where firms are embedded in a low-dimensional proxy space based on selected balance sheet positions. Since no true spatial network is observable in this context, the model’s performance depends heavily on how well the constructed space reflects economic relationships relevant to volatility transmission. Instead of selecting a single financial proxy a priori, we systematically explored all pairwise combinations of the 44 features to define candidate two-dimensional spaces. Each of the resulting 946 spaces defines a distinct notion of ``distance'' among firms. For each space, we fitted the spatiotemporal GARCH model and evaluated its out-of-sample performance in predicting volatility.

Volatility predictions were assessed using five-fold cross-validation. In each fold, the model parameters were estimated on the full time series of a subset of firms and used to predict volatility on the held-out firms (i.e., we left the entire time series out). While this random allocation of firms into folds does not reflect realistic market clustering, and may lead to a mild overestimation of predictive performance \cite{otto2024review_CV}, block cross-validation is not feasible in this setting as the underlying network structure is unknown. Moreover, for a fair comparison, we kept the fold assignments constant for all 946 pairwise proxy spaces. 

Figure \ref{fig:rmse} displays the ranked RMSE values across all 946 proxy spaces. The RMSE is computed between the predicted volatility $\hat{z}^2_t(s)$ and the absolute observed log-return $|y_t(s)|$, providing a direct measure of how well the conditional standard deviations reflect the magnitude of return innovations. Among all candidate spaces, the best prediction performance was obtained for the space defined by the variables \emph{other current liabilities} and \emph{long-term debt}, with an RMSE of approximately 0.0112. This proxy space reflects the structure of firms’ liabilities—both short- and long-term—and may be interpreted as a financial exposure dimension. Economically, it is plausible that firms with similar liability profiles are jointly sensitive to changes in credit conditions or interest rates, which in turn influence their volatility responses to market shocks.

Other high-performing spaces include combinations such as \emph{total cash from operating activities} with \emph{other current liabilities}, and \emph{net income} with \emph{total cash flows from investing activities}. These pairings suggest that profitability and internal cash generation are also relevant for explaining co-movement in volatility. Spaces involving net borrowings and various forms of net income point to the role of financing strategy and retained earnings in shaping a firm’s exposure to systemic volatility.

While several of these spaces result in near-identical RMSE values, we will, for the sake of interpretability and visual clarity, focus our analysis on the optimal two-dimensional space defined by other current liabilities and long-term debt. In principle, the analysis could be extended to three- or four-dimensional spaces, potentially lowering the RMSE further and capturing additional economic structure. However, such extensions come at the cost of reduced interpretability and visualisation clarity, especially when aiming to represent spatial variation in parameter surfaces and volatility fields.

\begin{figure}
\centering
\includegraphics[width = 0.8\textwidth]{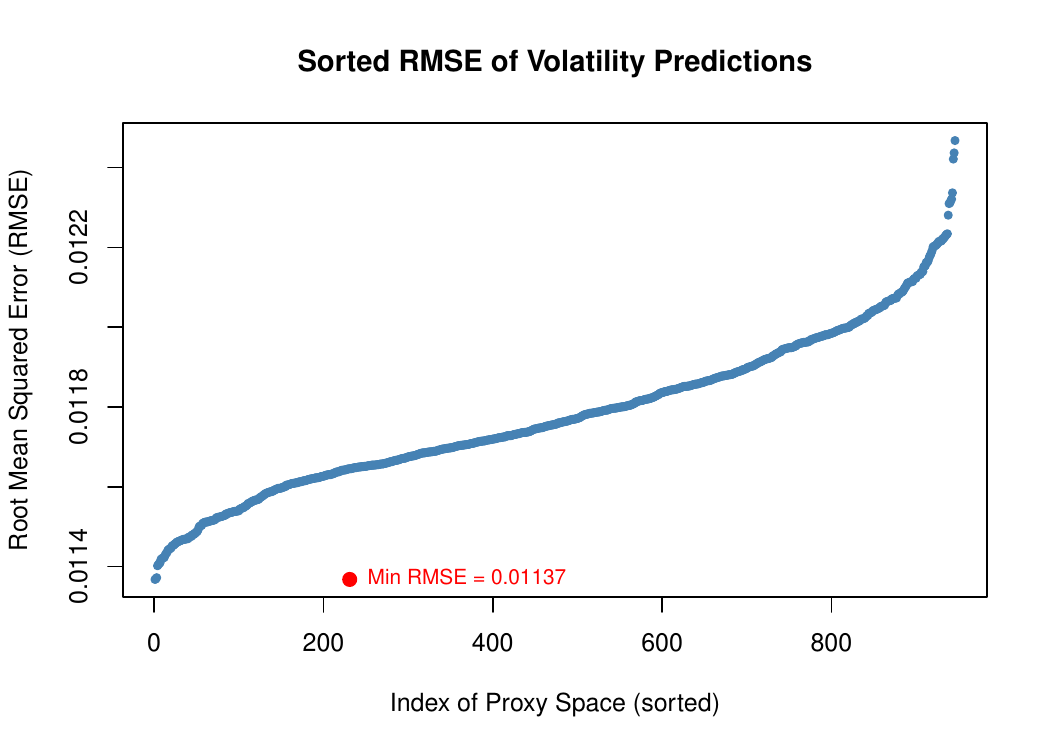}
\caption{Sorted root mean squared errors (RMSE) from the volatility prediction cross-validation study across all 946 candidate proxy spaces derived from pairwise combinations of balance sheet variables. Each point corresponds to the average RMSE obtained from a five-fold cross-validation, where each fold leaves out the full time series of a subset of firms. The narrow spread around the optimal value indicates that several proxy spaces yield comparably good predictive performance, highlighting robustness with respect to the choice of the latent spatial configuration.}\label{fig:rmse}	
\end{figure}

Figure \ref{fig:volatility_combined} provides a joint representation of the temporal and spatial structure of the predicted volatilities obtained from cross validation study. The left panel shows the predicted volatility trajectories over time, with grey lines representing individual firm-level volatilities and the bold black line denoting the cross-sectional median. This illustrates both market-wide and firm-specific volatility dynamics across the year 2019, with evident spikes around mid-year and increased volatility clustering. The right panel displays the firms in the optimal two-dimensional proxy space spanned by \emph{other current liabilities} and \emph{long-term debt}. This reveals substantial spatial heterogeneity in the volatility dynamics: firms with similar financial characteristics exhibit similar volatility levels and uncertainty, supporting the relevance of the proxy space for modelling and predicting financial risk.

\begin{figure}
  \centering
  \includegraphics[width=0.52\textwidth]{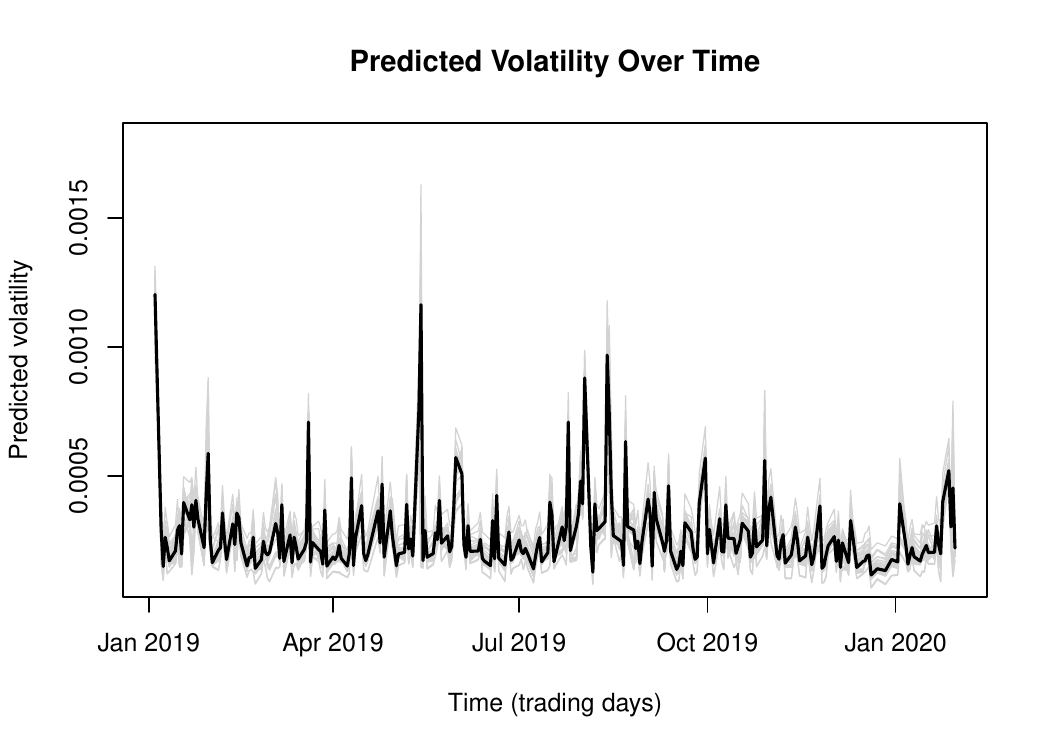} \hfill
  \includegraphics[width=0.38\textwidth]{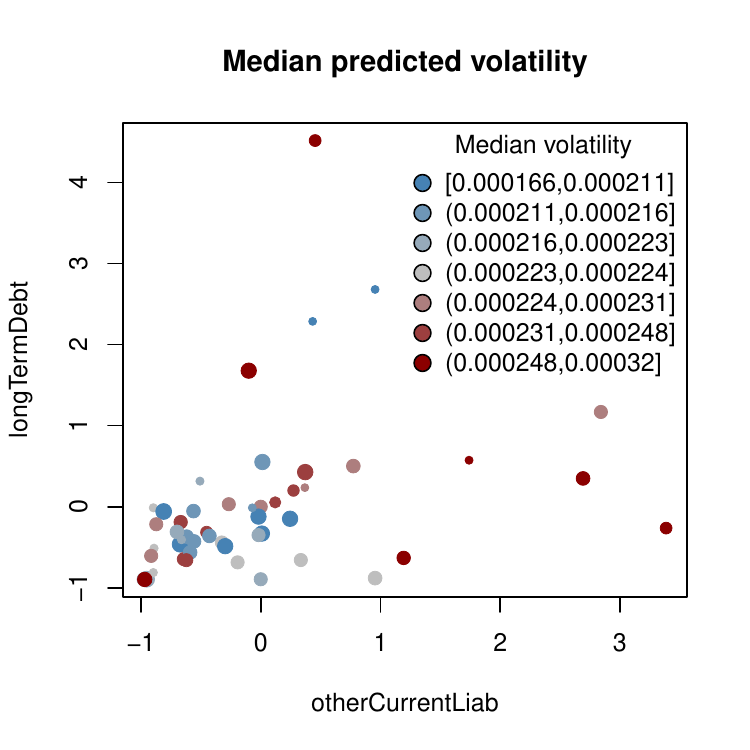}
  \caption{Predicted volatilities from the heterogeneous spatiotemporal GARCH model in the optimal 2-dimensional balance sheet space. The left panel shows the temporal evolution: the black curve indicates the median volatility across firms, and the grey curves represent individual firm-level predictions. The right panel depicts the firm positions in the proxy space; point colour denotes median predicted volatility and point size encodes its standard deviation (of the estimated volatilities).}
  \label{fig:volatility_combined}
\end{figure}

To analyse the selected space in more detail, we re-estimated the model using all firms and all 276 days. The innovation covariance parameters were estimated via maximum likelihood, yielding $\hat{\theta} = 80.3695$, $\hat{\sigma}^2 = 0.0573$, and $\hat{\tau}^2 = 0.17173$. These values suggest that the spatial correlation in the innovation process decays slowly. Indeed, the maximum distance among firms in the selected space was only 5.6. The large estimate of $\theta$ implies that nearly all firms are subject to common shocks, with little attenuation based on financial distance. This is further supported by the relatively large nugget effect $\tau^2$, which captures idiosyncratic variation not explained by the spatial process. While the variance parameters were statistically significant, the range parameter had a relatively large standard error, yielding a z-score of 1.419.

\begin{table}
\centering
\caption{List of the 50 companies included in the analysis.}
\begin{tabular}{ll}
\hline
\multicolumn{2}{c}{\textbf{Companies}} \\
\hline
Allianz        & JPM         \\
Abbott         & Merck       \\
Apple          & Microsoft   \\
AT\&T          & MUFG        \\
Santander      & Nestlé      \\
Bank of America & Novartis   \\
BHP            & Oracle      \\
BNP Paribas    & Pepsi       \\
BP             & Petrobras   \\
Chevron        & Pfizer      \\
Cisco          & PMI         \\
Citigroup      & Procter \& Gamble \\
Coca-Cola      & Roche       \\
ConocoPhillips & Shell       \\
E.ON           & Samsung     \\
Eni            & Sanofi      \\
ExxonMobil     & Schlumberger \\
Gazprom        & Siemens     \\
General Electric & Telefónica \\
GlaxoSmithKline & Total      \\
Google         & Toyota      \\
HP             & Verizon     \\
HSBC           & Vodafone    \\
IBM            & Walmart     \\
Intel          & Johnson \& Johnson \\
\hline
\end{tabular}
\label{tab:stocks}
\end{table}

Figure \ref{fig:parameter_surfaces} displays the estimated spatial surfaces of the three GARCH parameters, $\beta(s)$, $\alpha(s)$, and $\omega(s)$. To ensure stable and reliable estimation, we restrict the analysis to the convex hull of the financial proxy space. This is crucial because local likelihood estimators are known to perform poorly near the boundary, where neighbourhoods become asymmetric and effective sample sizes are insufficient.

The resulting parameter surfaces reveal substantial spatial heterogeneity in volatility dynamics across firms. The spatial variation of the parameters reveals distinct patterns across the financial proxy space. The persistence parameter $\beta(s)$ attains its highest values in the north-eastern and northern regions, corresponding to firms with high long-term debt and moderate to high current liabilities. This suggests that highly leveraged firms exhibit more persistent volatility dynamics, possibly due to slower adjustments in their financial risk profiles. In contrast, firms in the south-western region (characterised by low debt and liabilities) show lower persistence, indicating that shocks dissipate more rapidly. The innovation parameter $\alpha(s)$ is largest in the south-western region, highlighting that more conservatively financed firms respond more strongly to recent return shocks. Lastly, the unconditional volatility level $\omega(s)$ is elevated in the northern part of the space, again associated with high-debt firms, reflecting a higher baseline risk level. These spatial patterns offer economically meaningful insights into the relationship between firm-specific financial characteristics and the temporal dynamics of their volatility. Such heterogeneity in volatility dynamics is critical for modelling systemic risk in financial networks and may guide the development of portfolio allocation strategies and stress-testing frameworks that explicitly account for firm-specific balance sheet configurations.

\begin{figure}
\centering
\includegraphics[width=0.32\textwidth]{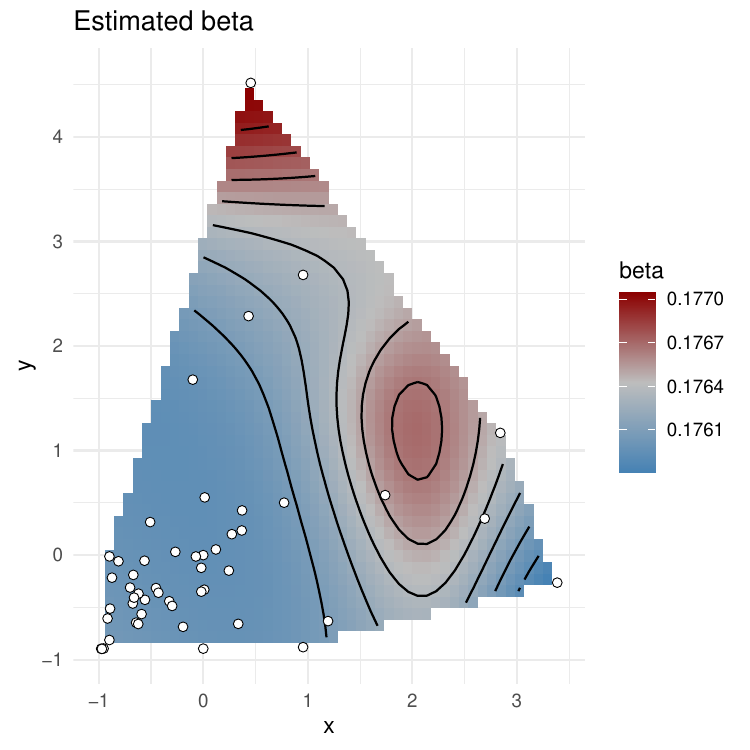}
\includegraphics[width=0.32\textwidth]{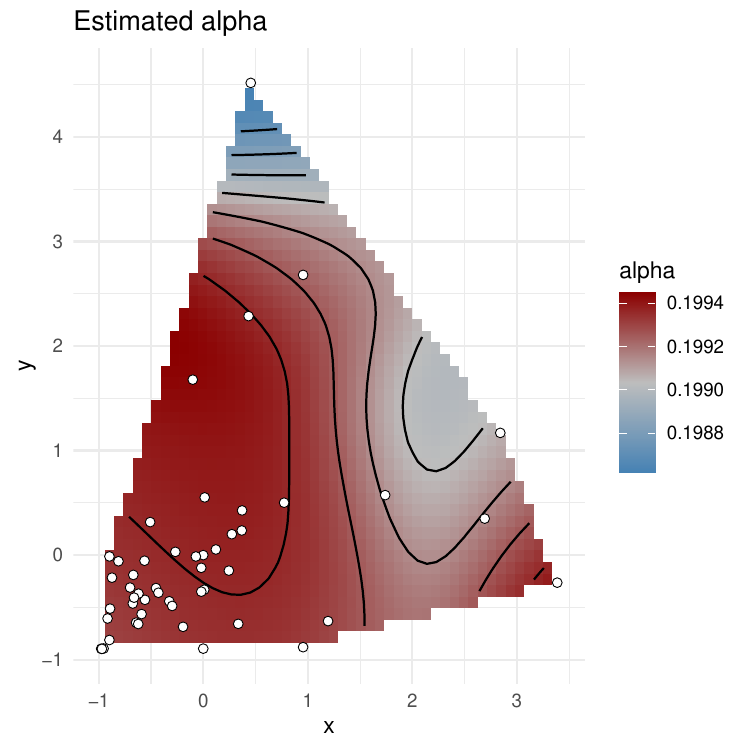}
\includegraphics[width=0.32\textwidth]{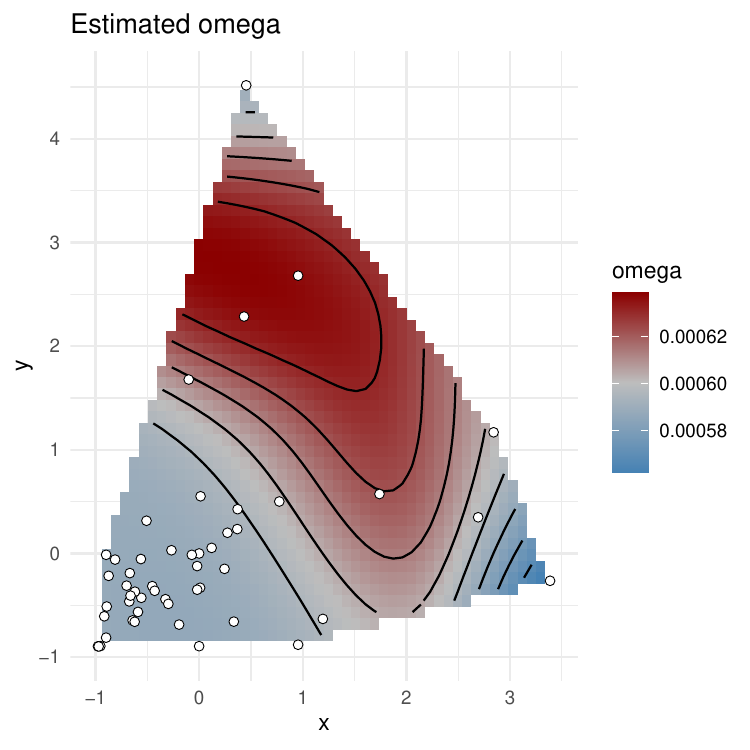}
\caption{Estimated spatial surfaces of the three GARCH parameters: $\beta$ (left), $\alpha$ (centre), and $\omega$ (right), restricted to the convex hull of the firm locations.}
\label{fig:parameter_surfaces}
\end{figure}

\section{Conclusion}\label{sec:conclusion}

We introduced a heterogeneous spatiotemporal GARCH framework on continuous spatial domains that permits spatially varying volatility parameters and contemporaneous cross‐sectional dependence through a geostatistical innovation field. We developed a localised quasi–likelihood estimator that yields smooth parameter surfaces and a kriging scheme for predicting volatility at unobserved locations. Theoretical results established consistency and asymptotic normality under a fixed–$m$, $T\to\infty$ regime. In simulations, the method recovered parameter surfaces with low bias and RMSE and produced accurate spatial and temporal volatility predictions. In an empirical study on a financial proxy space spanned by balance‐sheet variables, the model captured economically interpretable spatial heterogeneity in baseline risk, shock sensitivity, and persistence, and delivered competitive predictive performance while remaining computationally efficient via a two–stage estimation–kriging procedure.

Several limitations suggest natural extensions. The Gaussian–process assumption for the innovation field is convenient but not essential; accommodating heavy–tailed or skewed innovations would improve robustness to extreme events, which might be particularly useful for financial data. The fixed–domain asymptotics with increasing time simplify analysis but leave open theory for double asymptotics with $m\to\infty$ and $T\to\infty$, data–driven bandwidth selection, and boundary correction. Another promising direction for future research is to allow coefficients to vary in both space and time, i.e., $\omega(s,t)$, $\alpha(s,t)$, and $\beta(s,t)$, enabling local regime changes and structural breaks to be captured explicitly; this, in turn, calls for a space–time kriging framework for volatility that propagates uncertainty across both dimensions. Extending the spatial domain beyond Euclidean settings to networks and manifolds with covariance functions valid on graphs or spheres would broaden applicability to financial, transportation, and global climate systems. On the computational side, joint estimation with sparsity or low‐rank structure, or adaptive bandwidths. Finally, for the empirical application, model selection for the proxy space, multivariate co–volatility extensions, and mechanisms for time‐varying network topology offer promising avenues for capturing evolving interdependencies in financial markets and beyond.

\bibliographystyle{apalike} 

\begin{thebibliography}{}
	
	\bibitem[Aouri et~al., 2025]{Atika}
	Aouri, A., Boukelou, M., and Kharfouchi, S. (2025).
	\newblock Probabilistic insights into spatio-temporal {GARCH} model with
	location-dependent parameters.
	\newblock {\em TWMS Journal of Applied and Engineering Mathematics}.
	
	\bibitem[Billio et~al., 2023]{billio2023networks}
	Billio, M., Caporin, M., Frattarolo, L., and Pelizzon, L. (2023).
	\newblock Networks in risk spillovers: A multivariate {GARCH} perspective.
	\newblock {\em Econometrics and Statistics}, 28:1--29.
	
	\bibitem[Bollerslev, 1986]{bollerslev1986generalized}
	Bollerslev, T. (1986).
	\newblock Generalized autoregressive conditional heteroskedasticity.
	\newblock {\em Journal of econometrics}, 31(3):307--327.
	
	\bibitem[Borovkova and Lopuhaa, 2012]{borovkova2012spatial}
	Borovkova, S. and Lopuhaa, R. (2012).
	\newblock Spatial {GARCH}: A spatial approach to multivariate volatility
	modeling.
	\newblock {\em Available at SSRN 2176781}.
	
	\bibitem[Caporin and Paruolo, 2015]{caporin2015proximity}
	Caporin, M. and Paruolo, P. (2015).
	\newblock Proximity-structured multivariate volatility models.
	\newblock {\em Econometric Reviews}, 34(5):559--593.
	
	\bibitem[Chen et~al., 2019]{chen2019tail}
	Chen, C. Y.-H., H{\"a}rdle, W.~K., and Okhrin, Y. (2019).
	\newblock Tail event driven networks of {SIFIs}.
	\newblock {\em Journal of Econometrics}, 208(1):282--298.
	
	\bibitem[Dahlhaus and Subba~Rao, 2006]{Dahlhaus}
	Dahlhaus, R. and Subba~Rao, S. (2006).
	\newblock Statistical inference for time-varying {ARCH} processes.
	\newblock {\em The Annals of Statistics}, 34(1):1075--1114.
	
	\bibitem[Diebold and Yilmaz, 2023]{diebold2023past}
	Diebold, F.~X. and Yilmaz, K. (2023).
	\newblock On the past, present, and future of the {Diebold--Yilmaz} approach to
	dynamic network connectedness.
	\newblock {\em Journal of Econometrics}, 234:115--120.
	
	\bibitem[Engle, 1982]{engle1982autoregressive}
	Engle, R.~F. (1982).
	\newblock Autoregressive conditional heteroscedasticity with estimates of the
	variance of united kingdom inflation.
	\newblock {\em Econometrica: Journal of the econometric society}, pages
	987--1007.
	
	\bibitem[F{\"u}lle and Otto, 2024]{fulle2024spatial}
	F{\"u}lle, M.~J. and Otto, P. (2024).
	\newblock Spatial garch models for unknown spatial locations--an application to
	financial stock returns.
	\newblock {\em Spatial Economic Analysis}, 19(1):92--105.
	
	\bibitem[Gneiting et~al., 2010]{gneiting2010matern}
	Gneiting, T., Kleiber, W., and Schlather, M. (2010).
	\newblock Mat{\'e}rn cross-covariance functions for multivariate random fields.
	\newblock {\em Journal of the American Statistical Association},
	105(491):1167--1177.
	
	\bibitem[Hall and Heyde, 1980]{hall2014martingale}
	Hall, P. and Heyde, C.~C. (1980).
	\newblock {\em Martingale limit theory and its application}.
	\newblock Academic press.
	
	\bibitem[H{\o}lleland and Karlsen, 2020]{holleland2020stationary}
	H{\o}lleland, S. and Karlsen, H.~A. (2020).
	\newblock A stationary spatio-temporal garch model.
	\newblock {\em Journal of Time Series Analysis}, 41(2):177--209.
	
	\bibitem[Horváth and Kokoszka, 2003]{Horv}
	Horváth, L. and Kokoszka, P. (2003).
	\newblock {GARCH} processes: structure and estimation.
	\newblock {\em Bernoulli}, 9(2):201--227.
	
	\bibitem[Huang et~al., 2011]{huang2011class}
	Huang, W., Wang, K., Jay~Breidt, F., and Davis, R.~A. (2011).
	\newblock A class of stochastic volatility models for environmental
	applications.
	\newblock {\em Journal of Time Series Analysis}, 32(4):364--377.
	
	\bibitem[Mattera and Otto, 2024]{mattera2024network}
	Mattera, R. and Otto, P. (2024).
	\newblock Network log-{ARCH} models for forecasting stock market volatility.
	\newblock {\em International Journal of Forecasting}, 40(4):1539--1555.
	
	\bibitem[Otto et~al., 2023]{otto2023dynamic}
	Otto, P., Do{\u{g}}an, O., and Ta{\c{s}}p{\i}nar, S. (2023).
	\newblock A dynamic spatiotemporal stochastic volatility model with an
	application to environmental risks.
	\newblock {\em Econometrics and Statistics}.
	
	\bibitem[Otto et~al., 2024a]{otto2024dynamic}
	Otto, P., Do{\u{g}}an, O., and Ta{\c{s}}p{\i}nar, S. (2024a).
	\newblock Dynamic spatiotemporal {ARCH} models.
	\newblock {\em Spatial economic analysis}, 19(2):250--271.
	
	\bibitem[Otto et~al., 2025]{otto2025spatial}
	Otto, P., Do{\u{g}}an, O., Ta{\c{s}}p{\i}nar, S., Schmid, W., and Bera, A.~K.
	(2025).
	\newblock Spatial and spatiotemporal volatility models: A review.
	\newblock {\em Journal of Economic Surveys}, 39(3):1037--1091.
	
	\bibitem[Otto et~al., 2024b]{otto2024review_CV}
	Otto, P., Fass{\`o}, A., and Maranzano, P. (2024b).
	\newblock A review of regularised estimation methods and cross-validation in
	spatiotemporal statistics.
	\newblock {\em Statistic Surveys}, 18:299--340.
	
	\bibitem[Otto and Schmid, 2023]{otto2023general}
	Otto, P. and Schmid, W. (2023).
	\newblock A general framework for spatial {GARCH} models.
	\newblock {\em Statistical papers}, 64(5):1721--1747.
	
	\bibitem[Otto et~al., 2018]{otto2018generalised}
	Otto, P., Schmid, W., and Garthoff, R. (2018).
	\newblock Generalised spatial and spatiotemporal autoregressive conditional
	heteroscedasticity.
	\newblock {\em Spatial Statistics}, 26:125--145.
	
	\bibitem[Porcu et~al., 2016]{porcu2016spatio}
	Porcu, E., Bevilacqua, M., and Genton, M.~G. (2016).
	\newblock Spatio-temporal covariance and cross-covariance functions of the
	great circle distance on a sphere.
	\newblock {\em Journal of the American Statistical Association},
	111(514):888--898.
	
	\bibitem[Porcu et~al., 2023]{porcu2023stationary}
	Porcu, E., White, P.~A., and Genton, M.~G. (2023).
	\newblock Stationary nonseparable space-time covariance functions on networks.
	\newblock {\em Journal of the Royal Statistical Society Series B: Statistical
		Methodology}, 85(5):1417--1440.
	
	\bibitem[Rodeschini et~al., 2024]{rodeschini2024scenario}
	Rodeschini, J., Fass{\`o}, A., Finazzi, F., and Moro, A.~F. (2024).
	\newblock Scenario analysis of livestock-related {PM2.5} pollution based on a
	new heteroskedastic spatiotemporal model.
	\newblock {\em Socio-Economic Planning Sciences}, 96:102053.
	
	\bibitem[Sato and Matsuda, 2017]{sato2017spatial}
	Sato, T. and Matsuda, Y. (2017).
	\newblock Spatial autoregressive conditional heteroskedasticity models.
	\newblock {\em Journal of the Japan Statistical Society}, 47(2):221--236.
	
	\bibitem[Sato and Matsuda, 2018a]{sato2018spatial}
	Sato, T. and Matsuda, Y. (2018a).
	\newblock {\em Spatial GARCH models}.
	\newblock PhD thesis, Tohoku University.
	
	\bibitem[Sato and Matsuda, 2018b]{sato2018spatiotemporal}
	Sato, T. and Matsuda, Y. (2018b).
	\newblock Spatiotemporal arch models.
	\newblock Technical report, Graduate School of Economics and Management, Tohoku
	University.
	
	\bibitem[Tibshirani and Hastie, 1987]{tibshirani1987local}
	Tibshirani, R. and Hastie, T. (1987).
	\newblock Local likelihood estimation.
	\newblock {\em Journal of the American Statistical Association},
	82(398):559--567.
	
\end{thebibliography}

\appendix
\section{Appendix}
 \subsection{Proofs of Estimator Properties}\label{A}
In this appendix, we collect the detailed proofs of the main asymptotic results stated in Section \ref{sec:properties} .  In particular, we first establish the consistency of the local weighted quasi‑maximum likelihood estimator (Theorem \ref{consistency}), and then derive its asymptotic normality (Theorem \ref{normality}).

\begin{proof}\emph{Proof of Theorem \ref{consistency}.}

We recall the weighted quasi‑maximum likelihood estimator at \(s_{0}\) by
$$
\hat{\Theta}(s_{0}) \;=\; \arg\min_{\Theta\in\Omega}\mathcal{L}_{T}(s_{0},\Theta),
$$
where
$$
\mathcal{L}_{T}(s_{0},\Theta)
\;=\;
\frac{1}{m\,T'} \sum_{u=1}^{m}\sum_{t=p+1}^{T}
W_{b}(s_{0}-s_{u})\,I_{t}(\Theta),
\quad
T' = T - p - 1,
\quad
W_{b}(s) = b^{-2}K(x/b)\,K(y/b),
$$
and
$$
I_{t}(\Theta)
\;=\;
\frac{1}{2}\Bigl(
\log\sigma_{t}^{2}(s_{u})
+
\tfrac{Z_{t}^{2}(s_{u})}{\sigma_{t}^{2}(s_{u})}
\Bigr).
$$
Moreover, let $\Theta$ be the entire parameter surface across all $s \in \mathcal{D}$.

We also define the stationary–approximate criterion
$$
\widetilde{\mathcal{L}}_{T}(s_{0},\Theta)
\;=\;
\frac{1}{m\,T'} \sum_{u=1}^{m}\sum_{t=p+1}^{T}
W_{b}(s_{0}-s_{u})\,\widetilde{I}_{t}(s_{0};\Theta),
$$
where
$$
\widetilde{I}_{t}(s_{0};\Theta)
\;=\;
\frac{1}{2}\Bigl\{
\log\bigl[\widetilde{\sigma}_{t}^{2}(s)\bigr]
+
\tfrac{\widetilde{Z}_{t}(s)^{2}}
     {\widetilde{\sigma}_{t}^{2}(s)}
\Bigr\},
$$
and \(\widetilde{Z}_{t}(s)\) is locally stationary \ref{spa}.

\medskip\noindent\textbf{Step 1.}  
Set
$$
\mathcal{B}_{T}(s_{0},\Theta)
\;=\;
\mathcal{L}_{T}(s_{0},\Theta)
-
\widetilde{\mathcal{L}}_{T}(s_{0},\Theta).
$$
We show
$$
\sup_{\Theta}\bigl|\mathcal{B}_{T}(s_{0},\Theta)\bigr|
\;\xrightarrow{\mathcal{P}}\;0.
$$
Indeed,
$$
\bigl|\mathcal{B}_{T}\bigr|
\le
\frac{1}{2m\,T'}\sum_{u,t}
W_{b}(s_{0}-s_{u})
\bigl|I_{t}(\Theta)-\widetilde{I}_{t}(s_{0};\Theta)\bigr|
\;\le\;
I_{1}+I_{2},
$$
with
$$
I_{1}
=
\frac{1}{2m\,T'}\sum_{u,t}
W_{b}(s_{0}-s_{u})
\bigl|\log\sigma_{t}^{2}(s_{u})
-\log\widetilde{\sigma}_{t}^{2}(s)\bigr|,
$$
$$
I_{2}
=
\frac{1}{2m\,T'}\sum_{u,t}
W_{b}(s_{0}-s_{u})
\Bigl|
\tfrac{Z_{t}^{2}(s_{u})}{\sigma_{t}^{2}(s_{u})}
-
\tfrac{\widetilde{Z}_{t}^{2}(s)}
      {\widetilde{\sigma}_{t}^{2}(s)}
\Bigr|.
$$
Under Assumption \ref{ass2}, stationarity, and ergodicity, and by the deviation bounds in Theorem \ref{spa} and Corollary \ref{cor}, both \(I_{1}\) and \(I_{2}\) are \(o(1)\).

\medskip\noindent\textbf{Step 2.}  
Let
$$
L(s_{0},\Theta)
=\mathbb{E}\bigl[\widetilde{I}_{t}(s_{0};\Theta)\bigr].
$$
By the ergodic theorem in \(t\) and the fact that \(\sum_{u}W_{b}(s_{0}-s_{u}) \approx 1\), we have for each fixed \(\Theta\),
$$
\widetilde{\mathcal{L}}_{T}(s_{0},\Theta)
\;\xrightarrow{\mathcal{P}}\;
L(s_{0},\Theta).
$$
A standard equicontinuity argument on the compact \(\Omega\) then yields
$$
\sup_{\Theta}
\bigl|\widetilde{\mathcal{L}}_{T}(s_{0},\Theta)
- L(s_{0},\Theta)\bigr|
\;\xrightarrow{\mathcal{P}}\;
0.
$$

\medskip\noindent\textbf{Step 3.}  
Combining Steps 1–2 gives
$$
\sup_{\Theta}
\bigl|\mathcal{L}_{T}(s_{0},\Theta)
- L(s_{0},\Theta)\bigr|
\;\xrightarrow{\mathcal{P}}\;
0.
$$

\medskip\noindent\textbf{Step 4.}  
Let \(\Theta_{s_{0}}=\arg\min_{\Theta}L(s_{0},\Theta)\), which is unique by Lemma 5.5 of \cite{Horv}.  
By definition of \(\hat{\Theta}\) and uniform convergence,
$$
\mathcal{L}_{T}(s_{0},\hat{\Theta})
\;\le\;
\mathcal{L}_{T}(s_{0},\Theta_{s_{0}})
\;\xrightarrow{\mathcal{P}}\;
L(s_{0},\Theta_{s_{0}})
\;\le\;
L(s_{0},\hat{\Theta}).
$$
Hence \(\hat{\Theta}_{s_0}\xrightarrow{\mathcal{P}}\Theta_{s_0}\).

\end{proof}

\begin{proof}\emph{Proof of Theorem \ref{normality}.}

We use Taylor expansion around $\Theta_{s_0}$:

\begin{equation}\label{eq5}
\nabla \mathcal{L}\left(s_0, \hat{\Theta}\right)
=
\nabla \mathcal{L}_{T}\left(s_0, \Theta_{s_0}\right)
+
\nabla^2 \mathcal{L}_{T}\left(s_0, \Theta_{s_0}\right)\left(\hat{\Theta} - \Theta_{s_0}\right)
\end{equation}

Since $\nabla \mathcal{L}\left(s_0, \hat{\Theta}\right) = 0$, we get
$$
-\nabla \mathcal{L}\left(s_0, \Theta_{s_0}\right)
=
\nabla^2 \mathcal{L}\left(s_0, \Theta_{s_0}\right)\left(\hat{\Theta} - \Theta_{s_0}\right).
$$

We also have
$$
\mathscr{B}_t = \mathcal{L}\left(s_0, \Theta_{s_0}\right) - \tilde{\mathcal{L}}\left(s_0, \Theta_{s_0}\right),
$$
which implies
$$
\nabla \mathscr{B}_t = \nabla \mathcal{L}\left(s_0, \Theta_{s_0}\right) - \nabla \tilde{\mathcal{L}}\left(s_0, \Theta_{s_0}\right),
$$
and hence
$$
\nabla \mathcal{L}\left(s_0, \Theta_{s_0}\right)
=
\nabla \mathscr{B}_t + \nabla \tilde{\mathcal{L}}\left(s_0, \Theta_{s_0}\right).
$$

Replacing this in Equation \ref{eq5}, we obtain:
$$
- \nabla \mathscr{B}_t - \nabla \tilde{\mathcal{L}}\left(s_0, \Theta_{s_0}\right)
=
\nabla^2 \mathcal{L}\left(s_0, \Theta_{s_0}\right)\left(\hat{\Theta} - \Theta_{s_0}\right),
$$
or equivalently,
$$
\nabla^2 \mathcal{L}\left(s_0, \Theta_{s_0}\right)\left(\hat{\Theta} - \Theta_{s_0}\right)
+
\nabla \mathscr{B}_t
=
- \nabla \tilde{\mathcal{L}}\left(s_0, \Theta_{s_0}\right).
$$

We use $\nabla \tilde{\mathcal{L}}\left(s_0, \Theta_{s_0}\right)$ to show asymptotic normality:
\begin{equation}
\nabla \tilde{\mathcal{L}}\left(s_0, \Theta_{s_0}\right)
=
\frac{1}{m T'} \sum_{u=1}^m \sum_{t=p+1}^T W_b(s_0 - s_u)\, \nabla I_t(s_0,\Theta),
\end{equation}
where
$$
\nabla I_t(s_0,\Theta)
=
\frac{1}{2} \mathbb{E} \left[
\frac{- \nabla \widetilde{\sigma}_t^2(s)}{\widetilde{\sigma}_t^2(s)}
-
\frac{\eta_t^2(s)\, \nabla \widetilde{\sigma}_t^2(s)}{\widetilde{\sigma}_t^2(s)}
\right].
$$

$\nabla I_t(s_0,\Theta)$ is a martingale difference, which implies that $\nabla \tilde{\mathcal{L}}\left(s_0, \Theta_{s_0}\right)$ is a weighted sum of martingale differences. Using the central limit theorem and verifying the conditions of Corollary 3.1 \cite{hall2014martingale}, we conclude that
$$
\sqrt{T'}\, \nabla \tilde{\mathcal{L}}\left(s_0, \Theta_{s_0}\right)
\xrightarrow{\mathcal{D}}
\mathcal{N}\left(0,\, \mu^2\, \operatorname{var}\left[\eta_t^2(s)\right]\, \Sigma(s_0)\right),
$$
where
$$
\mu^2 = \sum_{u=1}^m \sum_{t=p+1}^T \frac{W_b(s_0 - s_u)^2}{2m^2},
\qquad
\Sigma(s_0) = \frac{1}{2} \mathbb{E} \left[
\frac{\nabla \widetilde{\sigma}_t^2(s)\, \nabla \widetilde{\sigma}_t^2(s)^\top}
{\widetilde{\sigma}_t^4(s)}
\right].
$$

As a result, we have
$$
\sqrt{T'}\left(\hat{\Theta} - \Theta_{s_0}\right)
+
\sqrt{T'}\, \Sigma(s_0)^{-1} \nabla \mathscr{B}_t
\xrightarrow{\mathcal{D}}
\mathcal{N}\left(0,\, \mu^2\, \operatorname{var}\left(\eta_t^2(s)\right)\, \Sigma(s_0)^{-1}\right).
$$

Now, consider
$$
\nabla \mathscr{B}_t
=
\frac{1}{m T'} \sum_{u=1}^m \sum_{t=p+1}^T W_b(s_0 - s_u)
\left[
\nabla I_t(\Theta) - \nabla I_t(s_0; \Theta)
\right].
$$

Using a Taylor expansion of $\nabla I_t(\Theta)$ around $s_0$, we get:
$$
\nabla I_t\left(\Theta\right)
=
\nabla I_t\left(s_0, \Theta_{s_0}\right)
+
\nabla I_t\left(s_0, \Theta_{s_0}\right)(s - s_0)
+
\frac{1}{2}(s - s_0)^2\, \nabla I\left(s_0, \Theta_{s_0}\right).
$$

Hence,
$$
\nabla \mathscr{B}_t
=
\frac{1}{m T'} \sum_{u=1}^m \sum_{t=p+1}^T W_b(s_0 - s_u)
\left[
\nabla I_t\left(s_0, \Theta_{s_0}\right)(s - s_0)
+
\frac{1}{2}(s - s_0)^2\, \nabla I\left(s_0, \Theta_{s_0}\right)
\right],
$$
where
$$
\mathbb{E}[\nabla \mathscr{B}_t]
=
\frac{1}{2m}\, b^2\, w^{(2)}\, \nabla I\left(s_0, \Theta_{s_0}\right),
$$
and
$$
w^{(2)} = \int \|s - s_0\|^2 W(s - s_0)\, ds,
$$
is the second moment of the kernel, while
$$
\operatorname{Var}[\nabla \mathscr{B}_t]
= O\left( \frac{b^4}{m T'} \right).
$$

Hence, we obtain:
$$
\sqrt{T'}\, \Sigma(s_0)^{-1} \nabla \mathscr{B}_t
=
\sqrt{T'}\, \Sigma(s_0)^{-1} \left(\frac{1}{2m} b^2 w^{(2)}\, \nabla I\left(s_0, \Theta_{s_0}\right)\right)
+ o(1).
$$

As a result, we conclude:
$$
\sqrt{T'}\left(\hat{\Theta} - \Theta_{s_0}\right)
+
\sqrt{T'}\, \Sigma(s_0)^{-1} \left( \frac{1}{2m} b^2 w^{(2)}\, \nabla I\left(s_0, \Theta_{s_0}\right) \right)
\xrightarrow{\mathcal{D}}
\mathcal{N}\left(0,\, \mu^2\, \operatorname{var}\left(\eta_t^2(s)\right)\, \Sigma(s_0)^{-1}\right).
$$
\end{proof}

\subsection{ The Approximated Spatially Stationary Process}\label{B}

In this section, we investigate the probabilistic properties of the STGARCH process. Specifically, we demonstrate that the spatially nonstationary process $ Z_t^2(s) $ can be locally approximated by a spatially stationary process $\widetilde{Z}^2_{t}(s) $ in the neighborhood of any fixed spatial location $ s_0 $.
This approximation plays a critical role in establishing the asymptotic properties of the estimator presented in Appendix \ref{A}.
We begin with the following lemma, which is essential for proving that the process $Z^2_t(s)$can be locally approximated by a spatially stationary process.\\

\begin{lemmaB}\label{lem}
Let $Z_t(s)$ be  STGARCH(p,q) model  and satisfies the previous Assumptions \ref{ass1} to \ref{ass5}. Suppose that for some $1-\gamma<\lambda<1$: $$\nu_k(s) =  [A(s)^k]_{1,1}.$$  We assume for all k and $ C > 0$
\begin{align*}
&(1) \quad \sup_{s} |\nu_k(s)| \leq C \lambda^k. \\
&(2) \quad \sup_s \mathbb{E}(|Z_t^2(s)|^{4 + v}) < \infty. \\
&(3) \quad \sup_{s,v} |\nu_k(s) - \nu_k(v)| \leq C k \lambda^{k-1} \|s - v\|_{\infty}. \\
&(4) \quad \text{The second partial derivatives of } \nu_k(s)  \text{ exist, with} \\
&\quad \left| \frac{\partial \nu_k(x,y)}{\partial x } \right| \leq C k \lambda^{k-1}, \quad 
\left| \frac{\partial \nu_k(x,y)}{\partial y } \right| \leq Ck \lambda^{k-1}, \\
&\quad \left| \frac{\partial^2 \nu_k(x,y)}{\partial x^2} \right| \leq Ck^2 \lambda^{k-2}, \quad \left| \frac{\partial^2 \nu_k(x,y)}{\partial y^2} \right| \leq Ck^2 \lambda^{k-2} \quad \text{and} \quad \left| \frac{\partial^2 \nu_k(x,y)}{\partial x \partial y} \right| \leq Ck^2 \lambda^{k-2}.
\end{align*}
\end{lemmaB}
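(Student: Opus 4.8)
The plan is to reduce all four statements to two uniform-in-$s$ facts about the companion matrix $A(s)$ underlying $\nu_k(s)=[A(s)^k]_{1,1}$: first, a geometric bound on the norm of its powers, $\sup_{s}\|A(s)^{k}\|_{\mathrm{spec}}\le C\lambda^{k}$; and second, that $s\mapsto A(s)$ is twice continuously differentiable with first and second partial derivatives bounded uniformly over $[0,1]^2$. Granting these, statements (1), (3) and (4) are purely algebraic consequences of the identity $\nu_k(s)=[A(s)^k]_{1,1}$ together with the elementary bound $|[M]_{1,1}|\le\|M\|_{\mathrm{spec}}$, while (2) is a moment statement inherited from the theory of stationary GARCH processes. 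I would therefore organise the proof around establishing the two pillars and then reading off each bound.

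For the first pillar, Assumption~\ref{ass2} gives $\sum_i\alpha_i(s)+\sum_j\beta_j(s)\le 1-\gamma<1$ uniformly in $s$, so the autoregressive polynomial $1-\sum_i\delta_i(s)z^i$ of the ARMA representation in \eqref{eq:arma} satisfies, for $|z|\le 1$, the direct modulus estimate $|1-\sum_i\delta_i(s)z^i|\ge 1-\sum_i\delta_i(s)\ge\gamma>0$; hence its roots lie uniformly outside the closed unit disk and $\sup_s\rho(A(s))\le\bar\lambda$ for some $\bar\lambda<\lambda$. Passing from this spectral-radius bound to a genuine bound on $\|A(s)^k\|_{\mathrm{spec}}$ is the delicate point, since companion matrices are non-normal and $\|A^k\|$ can exceed $\rho(A)^k$ by a polynomial-in-$k$ factor; I would control this via the Jordan/resolvent estimate $\|A(s)^k\|_{\mathrm{spec}}\le C(1+k)^{p-1}\rho(A(s))^k$ and absorb the polynomial factor into the spectral gap, since $(1+k)^{p-1}(\bar\lambda/\lambda)^k$ is bounded, using continuity of $s\mapsto A(s)$ and compactness of $\Omega$ to make $C$ independent of $s$. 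This yields (1). Statement (2) is separate: the uniform finiteness of the relevant higher moment of $Z_t^2(s)$ follows from strict stationarity together with the innovation moment condition of Assumption~\ref{ass5} and the parameter constraint of Assumption~\ref{ass2}, by the classical GARCH higher-moment criterion \citep{Horv}, with uniformity again secured by compactness of $\Omega$.

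For the second pillar I would exploit that the finitely many entries $\phi_1(s),\dots,\phi_p(s)$ of $A(s)$ are generated from $\omega,\alpha_i,\beta_j$ through the finite recursion $\phi_j(s)=\beta_j(s)+\sum_{i=1}^{j}\delta_i(s)\phi_{j-i}(s)$; being finite compositions of the $C^2$ functions of Assumption~\ref{ass4}, they are themselves $C^2$ with uniformly bounded first and second partial derivatives on the compact domain, so that $\|A(s)-A(v)\|_{\mathrm{spec}}\le L\|s-v\|_\infty$ and $\|\partial_x A\|,\|\partial_y A\|,\|\partial^2 A\|\le L'$. Property (3) then follows from the telescoping identity $A(s)^k-A(v)^k=\sum_{j=0}^{k-1}A(s)^j\bigl(A(s)-A(v)\bigr)A(v)^{k-1-j}$: bounding each of the $k$ summands by $C\lambda^{j}\cdot L\|s-v\|_\infty\cdot C\lambda^{k-1-j}$ gives the claimed $Ck\lambda^{k-1}\|s-v\|_\infty$. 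Property (4) is the differential analogue: applying $\partial(A^k)=\sum_{j=0}^{k-1}A^j(\partial A)A^{k-1-j}$ and bounding the $k$ terms gives the first-derivative bound $Ck\lambda^{k-1}$; differentiating once more produces one family of $\mathcal O(k)$ single-insertion terms of order $\lambda^{k-1}$ (carrying $\partial^2 A$) and one family of $\mathcal O(k^2)$ double-insertion terms of order $\lambda^{k-2}$, the latter dominating and yielding the $Ck^2\lambda^{k-2}$ bounds for $\partial_{xx}$, $\partial_{yy}$ and the mixed partial $\partial_{xy}$.

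The main obstacle is the first pillar: upgrading the uniform spectral-radius bound to a uniform geometric bound on $\|A(s)^k\|_{\mathrm{spec}}$, because the non-normality of the companion matrices means the constant implicit in $\|A^k\|_{\mathrm{spec}}\approx\rho(A)^k$ is not automatically uniform in $s$. Once the polynomial-in-$k$ correction is absorbed into the spectral gap and the constant is made uniform via compactness, the remainder is routine bookkeeping of the combinatorial factors $k$ and $k^2$ in the telescoping and product-rule expansions. A secondary point worth checking is that the bounds are stated for the $(1,1)$ entry whereas the arguments naturally produce bounds on the full matrix norm; this is harmless since $|[M]_{1,1}|\le\|M\|_{\mathrm{spec}}$, but one must verify that the same norm control is genuinely available for the trailing factors $A(v)^{k-1-j}$ and for the derivative insertions, which it is.
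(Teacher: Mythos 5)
Your proof is correct, and for parts (3) and (4) it follows essentially the same route as the paper: the telescoping identity $A(s)^k-A(v)^k=\sum_{j=0}^{k-1}A(s)^j(A(s)-A(v))A(v)^{k-1-j}$ for the Lipschitz bound, and the insertion formula $\partial(A^k)=\sum_{j=0}^{k-1}A^j(\partial A)A^{k-1-j}$ for the derivatives, with the second-order case handled by counting the $\mathcal{O}(k)$ single-insertion terms carrying $\partial^2 A$ against the dominating $\mathcal{O}(k^2)$ double-insertion terms (the paper only gestures at this last step with ``by applying the same method''). Where you genuinely diverge is your first pillar: the paper does not derive the uniform geometric bound $\sup_s\|A(s)^k\|_{\mathrm{spec}}\le C\lambda^k$ at all --- it is taken as given in the proof (and is in effect postulated already in Section~\ref{sec:kriging}, where $\sup_s|\nu_k(s)|\le K\lambda^k$ is assumed) --- whereas you reconstruct it from Assumption~\ref{ass2} via the modulus estimate $|1-\sum_i\delta_i(s)z^i|\ge\gamma$ on the closed unit disk and then upgrade the spectral-radius bound to a norm bound. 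That upgrade is the one place to be careful: the Jordan-form constant is not continuous in $A(s)$, so compactness alone does not make it uniform; you should commit to the resolvent (Cauchy-integral) version $A(s)^k=\frac{1}{2\pi i}\oint_{|z|=r}z^k(zI-A(s))^{-1}\,dz$ with $\bar\lambda<r<\lambda$, where continuity of $s\mapsto A(s)$ and compactness do give a uniform resolvent bound. One further mismatch worth flagging, though it is the paper's notational issue rather than yours: the paper's $A(s)$ has first row $\phi_1(s),\dots,\phi_p(s)$, while your root argument concerns the AR polynomial in the $\delta_i(s)$, so you should state explicitly which companion matrix the spectral bound refers to. Your treatment of (2) via the classical GARCH higher-moment criterion of \cite{Horv}, with uniformity from compactness of $\Omega$, is likewise more complete than the paper's one-line assertion. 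In short: same computational skeleton for (3)--(4), but your version supplies the two foundations ((1) and (2)) that the paper assumes rather than proves, at the modest cost of the resolvent argument.
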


\begin{proof}
Since there exists a $1-\gamma<\lambda <1$ and finite $C$ such that for all $k$,
$$
\sup _s\left\|A(s)^k\right\|_{\text {spec }} \leq C \lambda^k,
$$
 we have
$$
\left\|Z_t^2(s)\right\|_2 \leq  \sum_{k=0}^{\infty}\left\|A(u)^k\right\|_{\text {spec }} \left\|\zeta_{t-k}(s)\right\|_2 \leq C \sum_{k=0}^{\infty} \lambda^k\left\|\zeta_{h,-i}(u)\right\|_2.
$$

Furthermore, since   $ \sup_s\left|\nu_k(s) \right| \leq K \sup_s \left\|A(s)^k\right\|,$ $(1)$ follows directly.
Further, since $\mathbb{E}\left(\left| \eta_t^{4+v}(s)\right| \right)<\infty$, $(2)$ can easily be verified.
We now prove $(3)$. Using $\nu_k(s)= [A(s)^k]_{(1, 1)}$ and the norm inequalities, we have
$$
 \left|\nu_k(s)-\nu_k(v)\right| \leq \left\|A(s)^k -A(v)^k \right\|_2  
.$$

By the Lipschitz continuity of $ \omega(s)$ , $\alpha_i(s)$ and $\beta_j(s)$ we have that $$\|A(s)-A(v)\|_2 \leq C\|s-v\|_\infty .$$ Hence, by using $\sup _s\left\|A(s)^k\right\|_{\text {spec}} \leq C \lambda^k $ we obtain
$$
\left\|A(s)^k-A(v)^k\right\|_2 \leq \sum_{i=0}^{k-1}\left\|A(s)^{i}\right\|_{\text {spec }} \| A(s)- A(v)\|_2\left\| A(v)^{k-i-1}\right\|_{ \text {spec }} \leq C k \lambda^{k-1}\|s-v\|_{\infty},
$$
for some $C$ independent of $s, v$. This establishes $(3)$.
Finally, we prove $(4)$ we consider the partial derivatives of $\nu_k(s)$. From the definition of $\nu_k(s)$ we have
$$
\left|\frac{\partial \nu_k(s))}{\partial x}\right|  \leq\left\|\frac{\partial A(s)^k }{\partial x}\right\|_2 
$$

 we require a bound for $\left\|\frac{ A(s)^k}{\partial x}\right\|_2$. Expanding $\frac{A(s)^k}{\partial x}$ gives
$$
\left\|\frac{\partial A(s)^k}{\partial x}\right\|_2=\left\|\sum_{i=0}^{k-1} A(s)^i \frac{\partial A(u)}{\partial x} A(s)^{k-i-1}\right\|_2
$$

By using the uniform boundedness of $\frac{ \omega(s)}{\partial x}, \; \frac{ \alpha_i(s)}{\partial x}\;  \text{and}  \; \frac{ \beta_j(s)}{\partial x}$ we have
$$
\left\|\frac{\partial A(s)^k}{\partial x}\right\|_2 \leq C \sum_{i=0}^{k-1}\left(\left\|A(s)^i\right\|_{\text {spec }} \left\|\frac{\partial A(s)}{\partial x}\right\|_2 \left\|A(s)^{k-i-1}\right\|_{\text {spec }}\right) \leq C k \lambda^{k-1}
$$

 we have $\left| \frac{\partial \nu_k(x,y)}{\partial x } \right| \leq C k \lambda^{k-1}$. By applying the same method, we obtain the bounds for all partial derivatives of $\nu_k(s)$ up to the second order.

\end{proof}

The theorem presented below illustrates that the spatially stationary process $ \widetilde{Z}^2_{t}(s)$ offers a local approximation of the spatially nonstationary process $Z^2_t(s)$ within a certain neighborhood of $s_0$.
\begin{theoremB}\label{spa}
Let $Z_t(s)$ be  STGARCH model  and satisfies Assumptions \ref{assumptions}. $\widetilde{Z}^2_{t}(s)$ is defined as follow:$$ \widetilde{Z}_{t}^2(s) =\sum_{k=0}^{\infty} \nu_k(s_0) \zeta_{t-k}(s) + C(s_0),$$

  Then we have
$$ |Z_t^2(s) - \widetilde{Z}^2_{t}(s)| \leq \|s- s_0\|_{\infty} W_t(s),  $$
where for some $1 - \gamma \leq \lambda < 1$,
$$  W_t(s) = C\sum_{k=0}^{\infty}(k + 1)^2 \lambda^k |\zeta_{t-k}(s)| + C_1, $$

$C$ and $ C_1$ are  a finite constant independent of s, and $W_t(s)$ is a stationary process in both space and time with a finite variance.
\end{theoremB}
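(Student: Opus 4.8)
The plan is to work directly from the two moving–average representations and to reduce the whole statement to the coefficient bounds already supplied by Lemma~\ref{lem}. Since the true process satisfies the expansion $Z_t^2(s)=\sum_{k=0}^\infty \nu_k(s)\,\zeta_{t-k}(s)+C(s)$ from \eqref{eq2}, while the frozen–coefficient approximation is $\widetilde Z_t^2(s)=\sum_{k=0}^\infty \nu_k(s_0)\,\zeta_{t-k}(s)+C(s_0)$, both series are driven by the \emph{same} innovation sequence $\{\zeta_{t-k}(s)\}$ and differ only in their deterministic coefficients. Subtracting and applying the triangle inequality therefore gives
$$
\bigl|Z_t^2(s)-\widetilde Z_t^2(s)\bigr|
\;\le\;\sum_{k=0}^\infty \bigl|\nu_k(s)-\nu_k(s_0)\bigr|\,\bigl|\zeta_{t-k}(s)\bigr|
\;+\;\bigl|C(s)-C(s_0)\bigr|,
$$
which cleanly isolates the two sources of discrepancy: the perturbation of the MA weights and the perturbation of the centring constant.

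First I would control the coefficient term by Lemma~\ref{lem}(3), which yields $\sup_s\bigl|\nu_k(s)-\nu_k(s_0)\bigr|\le C\,k\,\lambda^{k-1}\,\|s-s_0\|_\infty$. Because $k\lambda^{k-1}=\lambda^{-1}k\lambda^{k}\le \lambda^{-1}(k+1)^2\lambda^{k}$, this weight is dominated, after absorbing $\lambda^{-1}$ into $C$, by the looser envelope $(k+1)^2\lambda^{k}$ appearing in the statement; the extra polynomial power is harmless and conveniently also dominates the second–order bounds of Lemma~\ref{lem}(4) needed for the bias expansion in Theorem~\ref{normality}. For the constant term I would observe that $C(s)=\omega(s)\bigl(1-\sum_j \delta_j(s)\bigr)^{-1}$ is Lipschitz: by Assumption~\ref{ass2} the denominator is bounded below by $\gamma>0$ and $\omega(s)\in[\rho_1,\rho_2]$, while Assumption~\ref{ass4} makes both numerator and denominator continuously differentiable, so the quotient obeys $|C(s)-C(s_0)|\le C_1\|s-s_0\|_\infty$. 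Factoring $\|s-s_0\|_\infty$ out of both pieces then produces exactly the claimed bound with $W_t(s)=C\sum_{k=0}^\infty (k+1)^2\lambda^{k}\,|\zeta_{t-k}(s)|+C_1$.

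It then remains to verify that $W_t(s)$ is well defined, has finite variance, and is stationary. Convergence of the series and the moment bound follow from Minkowski's inequality, $\|W_t(s)-C_1\|_2\le C\sum_{k=0}^\infty (k+1)^2\lambda^{k}\,\|\zeta_{t-k}(s)\|_2$, together with the uniform second–moment control $\sup_s\|\zeta_t(s)\|_2<\infty$ (a consequence of Lemma~\ref{lem}(2), equivalently of the $4+v$ moment in Assumption~\ref{ass5}) and the summability $\sum_k (k+1)^2\lambda^{k}<\infty$ for $\lambda<1$. Temporal stationarity of $W_t(s)$ is inherited from the strict temporal stationarity of $\{\zeta_t(s)\}$, since the weights $(k+1)^2\lambda^{k}$ are deterministic and time–invariant; the spatial invariance is understood in the sense that these moment bounds hold uniformly in $s$, which is guaranteed by Assumption~\ref{ass1} and the compact parameter space of Assumption~\ref{ass2}.

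The hard part will not be any single inequality but rather ensuring that the heavy lifting is legitimately delegated to Lemma~\ref{lem}: the decay rate $1-\gamma<\lambda<1$ and the explicit geometric–polynomial bounds on $\nu_k$ and its derivatives are precisely what make the coefficient series summable and the entire difference controllable by a single multiple of $\|s-s_0\|_\infty$. The one point requiring genuine care is the uniform-in-$s$ finiteness of $\|\zeta_{t-k}(s)\|_2$: since $\zeta_t(s)=\sigma_t^2(s)\bigl(\eta_t^2(s)-1\bigr)$ carries the location-dependent volatility level, its second moment must be bounded uniformly over the domain, and this is exactly where the compactness of the parameter set in Assumption~\ref{ass2} combined with the strengthened moment condition of Assumption~\ref{ass5} becomes indispensable.
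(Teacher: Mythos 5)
Your proposal is correct and follows essentially the same route as the paper's own proof: subtract the two MA($\infty$) representations driven by the common innovations $\zeta_{t-k}(s)$, bound $|\nu_k(s)-\nu_k(s_0)|$ via Lemma~\ref{lem}(3), and absorb the Lipschitz perturbation of $C(s)$ into the constant $C_1$. In fact you supply more detail than the paper does --- the domination $k\lambda^{k-1}\le \lambda^{-1}(k+1)^2\lambda^k$, the Lipschitz argument for $C(s)$ from Assumptions~\ref{ass2} and~\ref{ass4}, and the Minkowski-based verification that $W_t(s)$ has finite variance are all left implicit in the paper's proof.
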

\begin{proof}
 By using the MA( $\infty$ ) representation, we can  evaluate an upper bound for $\left|Z_t^2(s) - \widetilde{Z}^2_{t}(s)\right|$. From the definition of $\widetilde{Z}^2_{t}(s)$,  Lemma \ref{lem}(3) and Lipschitz continuity,  we have
$$
\left|Z_t^2(s) - \widetilde{Z}^2_{t}(s)\right|  \leq \sum_{k=0}^{\infty} \left|\nu_k(s)-\nu_k\left(s_0\right)\right|\left|\zeta_{t-k}(s)\right| + \left |C(s) - C(s_0)  \right|$$ 
$$ \leq C\left\|s-s_0\right\|_{\infty} \sum_{k=0}^{\infty} (k + 1)^2 \lambda^k\left|\zeta_{t-k}(s)\right|  + C_1 \left\|s-s_0\right\|_{\infty}.$$
Therefore,
$$\left|Z_t^2(s) - \widetilde{Z}^2_{t}(s)\right|\leq\left\|s-s_0\right\|_{\infty} W_t(s),$$
where $ W_t(s) = C\sum_{k=0}^{\infty}(k + 1)^2 \lambda^k |\zeta_{t-k}(s)| + C_1 $. 
\end{proof}
\begin{corollaryB}\label{cor}
Let $Z_t(s)$ be  STGARCH model  and satisfies  Assumptions \ref{assumptions}, and $$Z_t^2(s) = \sigma_t^2(s) \eta_t^2(s). $$
if $$ |Z_t^2(s) - \widetilde{Z}^2_{t}(s)| \leq \|s- s_0\|_{\infty} W_t(s),  $$
then 
$$ |\sigma_t^2(s) - \widetilde{\sigma}^2_{t}(s)| \leq \|s- s_0\|_{\infty} N_t(s),  $$ where 
$$  N_t(s) = \frac{W_t(s)}{\eta_t^2(s)} $$
 $N_t(s)$ is a stationary process in both space and time with a finite variance.
 \end{corollaryB}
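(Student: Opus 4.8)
The plan is to exploit that the nonstationary process and its locally stationary approximation are driven by the \emph{same} innovation field. Writing $Z_t^2(s) = \sigma_t^2(s)\,\eta_t^2(s)$ and, since $\widetilde Z_t(s) = \widetilde\sigma_t(s)\,\eta_t(s)$, also $\widetilde Z_t^2(s) = \widetilde\sigma_t^2(s)\,\eta_t^2(s)$, the two squared processes share a common multiplicative factor $\eta_t^2(s)$. Subtracting and factoring it out yields the exact identity
\begin{equation*}
Z_t^2(s) - \widetilde Z_t^2(s) = \eta_t^2(s)\,\bigl(\sigma_t^2(s) - \widetilde\sigma_t^2(s)\bigr),
\end{equation*}
which is the single observation that drives the whole corollary.

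From here I would divide by $\eta_t^2(s)$, which is strictly positive almost surely under Assumption~\ref{ass1} so that the quotient is well defined, pass to absolute values, and insert the bound supplied by Theorem~\ref{spa}. This gives
\begin{equation*}
\bigl|\sigma_t^2(s) - \widetilde\sigma_t^2(s)\bigr|
= \frac{\bigl|Z_t^2(s) - \widetilde Z_t^2(s)\bigr|}{\eta_t^2(s)}
\le \|s - s_0\|_{\infty}\,\frac{W_t(s)}{\eta_t^2(s)},
\end{equation*}
so the claimed inequality holds with $N_t(s) = W_t(s)/\eta_t^2(s)$. Stationarity of $N_t(s)$ in space and time is then inherited: $W_t(s)$ is jointly stationary by Theorem~\ref{spa} and $\eta_t^2(s)$ is jointly stationary by Assumption~\ref{ass1}, and $N_t(s)$ is a fixed measurable functional of the pair evaluated at the same space--time index, hence stationary as well.

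The hard part will be the finite-variance claim, because $W_t(s)$ and $\eta_t^2(s)$ are \emph{dependent}: the leading ($k=0$) term of $W_t(s)$ contains $|\zeta_t(s)| = \sigma_t^2(s)\,|\eta_t^2(s) - 1|$, so the current innovation appears in both numerator and denominator. To estimate $\mathbb{E}[N_t(s)^2] = \mathbb{E}\bigl[W_t(s)^2/\eta_t^4(s)\bigr]$ I would split $W_t(s)$ into its $k=0$ contribution and the tail $k \ge 1$. The tail depends only on $\{\eta_{t-k}(s)\}_{k\ge 1}$, independent of $\eta_t(s)$, so its contribution factorises into a geometrically convergent series in $\lambda$ (finite by the moment condition behind Theorem~\ref{spa}) times $\mathbb{E}[\eta_t^{-4}(s)]$; using independence of $\sigma_t^2(s)$ from $\eta_t(s)$, the $k=0$ term likewise reduces to a bounded factor times a negative moment of $\eta_t^2(s)$.

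The whole statement therefore rests on the integrability of $\eta_t^{-4}(s)$ near the origin, i.e.\ on a \emph{negative}-moment condition that is not contained in Assumptions~\ref{ass1}--\ref{ass5} and that in fact fails for the Gaussian innovations singled out as the leading case, where already $\mathbb{E}[\eta_t^{-2}]=\infty$. I would thus either strengthen the hypotheses to require that the innovation density vanishes fast enough at zero, guaranteeing $\mathbb{E}[\eta_t^{-4}(s)] < \infty$, or---since the corollary is ultimately used only to control the stationary-approximation error after averaging against the kernel weights in Appendix~\ref{A}---replace the pointwise finite-variance assertion by the weaker integrated bound that actually feeds the consistency and normality proofs.
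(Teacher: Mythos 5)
Your derivation of the inequality is exactly the paper's (implicit) argument: the paper states the corollary without any written proof, and the intended one-liner is precisely your observation that $Z_t^2(s)$ and $\widetilde Z_t^2(s)$ share the factor $\eta_t^2(s)$, so that dividing the bound of Theorem~\ref{spa} by $\eta_t^2(s)$ yields $N_t(s)=W_t(s)/\eta_t^2(s)$. Your stationarity argument (a fixed measurable functional of the jointly stationary pair $(W_t(s),\eta_t^2(s))$) is likewise the natural justification the paper leaves unsaid. One small definitional wrinkle worth noting: Theorem~\ref{spa} defines $\widetilde Z_t^2(s)=\sum_k \nu_k(s_0)\zeta_{t-k}(s)+C(s_0)$ using the \emph{nonstationary} innovations $\zeta_t(s)=\sigma_t^2(s)(\eta_t^2(s)-1)$, so $\widetilde Z_t^2(s)$ does not literally factor as $\widetilde\sigma_t^2(s)\eta_t^2(s)$ with $\widetilde\sigma_t^2$ the Section~2 recursion; your exact identity holds under the Section~2 definition of the locally stationary process, and the corollary is best read as defining $\widetilde\sigma_t^2(s):=\widetilde Z_t^2(s)/\eta_t^2(s)$, which is what the division amounts to in either reading.

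Your critique of the finite-variance claim is correct and identifies a genuine flaw in the corollary as stated, not in your proof. The paper offers no argument whatsoever for $\operatorname{Var}(N_t(s))<\infty$, and none is possible under Assumptions~\ref{ass1}--\ref{ass5}: since the tail $\sum_{k\ge 1}$ of $W_t(s)$ plus the constant $C_1$ is independent of $\eta_t(s)$, $\mathbb{E}[N_t(s)^2]$ contains a factor $\mathbb{E}[\eta_t^{-4}(s)]$, and for the Gaussian innovations the paper singles out as its leading case even $\mathbb{E}[\eta_t^{-2}(s)]=\infty$ (the density is bounded away from zero near the origin), so $N_t(s)$ lacks a finite mean, let alone variance. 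Your two proposed remedies are the right ones: either add a negative-moment (density-vanishing-at-zero) condition on $\eta_t(s)$, or observe that the consistency and normality proofs in Appendix~\ref{A} only need the kernel-weighted, integrated control of $|\sigma_t^2(s)-\widetilde\sigma_t^2(s)|$, for which the bound of Theorem~\ref{spa} together with $\sigma_t^2(s)\ge\rho_1$ can be used directly (e.g.\ via $|\log\sigma_t^2-\log\widetilde\sigma_t^2|\le \rho_1^{-1}|\sigma_t^2-\widetilde\sigma_t^2|$), bypassing $N_t(s)$ entirely.
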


\end{document}